\def\BibTeX{{\rm B\kern-.05em{\sc i\kern-.025em b}\kern-.08em
    T\kern-.1667em\lower.7ex\hbox{E}\kern-.125emX}}
\newtheorem{theorem}{\textbf{Theorem}}
\newtheorem{corollary}{\textbf{Corollary}}[theorem]
\def\lthe#1{\textsc{Theorem~\ref{the:#1}}}
\def\lcor#1{\textsc{Corollary~\ref{cor:#1}}}
\def\lieq#1{\textsc{Inequation~\ref{eq:#1}}}
\def\lag#1{\textsc{Algorithm~\ref{algo:#1}}}
\def\lcom#1{$\mathcal{O}$(#1)}
\newif\ifmark
\newif\ifhidenote
\newif\ifspace
  \newcommand{\del}[1]{\sout{#1}}
     \newcommand{\note}[1]{}
      \newcommand{\note}[1]{{\sffamily\itshape\bfseries\uline{#1}}}
  \newcommand{\del}[1]{}
  \newcommand{\note}[1]{}
\begin{document}

\title{TupleChain: Fast  Lookup of OpenFlow Table with Multifaceted Scalability}

\author{
\IEEEauthorblockN{
Yanbiao Li\IEEEauthorrefmark{1}\IEEEauthorrefmark{2},
Neng Ren\IEEEauthorrefmark{1}\IEEEauthorrefmark{2},
Xin Wang\IEEEauthorrefmark{3},
Yuxuan Chen\IEEEauthorrefmark{1}\IEEEauthorrefmark{2},
Xinyi Zhang\IEEEauthorrefmark{1}\IEEEauthorrefmark{2},
Lingbo Guo\IEEEauthorrefmark{1}\IEEEauthorrefmark{2},
and Gaogang Xie\IEEEauthorrefmark{1}\IEEEauthorrefmark{2}}
\IEEEauthorblockA{\IEEEauthorrefmark{1} 
\textit{Computer Network Information Center},
\textit{Chinese Academy of Sciences}, Beijing, China}
\IEEEauthorblockA{\IEEEauthorrefmark{2} 
\textit{School of Computer Science and Technology},
\textit{University of Chinese Academy of Sciences}, Beijing, China}
\IEEEauthorblockA{\IEEEauthorrefmark{3} 
\textit{Department of Electrical and Computer Engineering},
\textit{Stony Brook University}, Stony Brook, NY, USA}
}

\maketitle

\begin{abstract}
OpenFlow switches are fundamental components of software
defined networking, where the key operation is to look up flow tables
to determine which flow an incoming packet belongs to. This
needs to address the same multi-field rule-matching problem as legacy
packet classification, but faces more serious scalability
challenges. The demand of fast on-line updates makes most
existing solutions unfit, while the rest still lacks the scalability
to either large data sets or large number of fields to match for a rule. In this work, we
propose TupleChain for fast OpenFlow table lookup with
multifaceted scalability. We group rules based on their masks, each
  being maintained with a hash table, and explore the connections
  among rule groups to skip unnecessary hash probes for fast search. We show via
  theoretical analysis and extensive experiments that the proposed
  scheme not only has competitive computing complexity, but is also scalable
  and can achieve high performance in both search and update. It can process
multiple millions of packets per second, while dealing with millions
of on-line updates per second at the same time, and its lookup
speed maintains at the same level no mater it handles a large flow table
with 10 million rules or a flow table with every entry having as many
as 100 match fields.
\end{abstract}

\begin{IEEEkeywords}
OpenFlow, lookup, on-line updates, scalability
\end{IEEEkeywords}

\section{Introduction}
\label{sect:intro}

\subsection{Background}
\label{sect:background}
Software Defined Networking (SDN)~\cite{sdn} not only offers richer flexibility to operators but also adds more scalability and programmability to forwarding devices. As the de facto standard of SDN, OpenFlow~\cite{of-specification, openflow} defines how the control plane (i.e., controllers) and the forwarding plane (i.e., switches) collaborate, and how a switch processes packets.

An OpenFlow switch contains a sequence of flow tables, where every entry of which is identified by a number of match fields and a match precedence used to decide the priority in case of multiple matches. The core operation of an OpenFlow switch is to look up its flow tables to find the best flow entry that matches the incoming packet, and in case of multiple matches, the one with the highest priority is selected. Generally, lookups on flow tables are pipelined. Prior studies have dramatically reduced the cost in pipeline~\cite{ovs, BFI}, driving our focus to the current performance bottleneck --- flow table lookup (flow lookup in short).


\subsection{Problem Statement and Scalability Challenges}
\label{sect:scalability}
Given a flow table composed of rules each with $d$-field, for an incoming packet, flow lookup is performed by matching the $d$-field key extracted from the packet header (and optionally some metadata) against the flow table to find a rule that matches this packet and has the highest priority. The problem of multi-field rule-matching has been well studied for packet classification. However, in addition to the high speed, flow lookup poses stronger demands on the scalability. 

First, a flow lookup scheme must work well in the presence of highly frequent rule updates~\cite{kuzniar2015you}. Generally, to reduce their influence on the lookup process, rule updates must be performed as quickly as they arrive. However, this is exactly what most packet classification algorithms~\cite{BytesCut,  class:cut:effi, class:cut:hyper} lack. Recent attentions have been drawn to both fast lookup and fast updates~\cite{TM, MT, TupleTree, CutTSS}.

Second, a flow lookup scheme should work well with large data sets.
In a typical data center, an edge switch may have to handle more than $1$ million flows~\cite{scale-cloud}, while a gateway router at the border of autonomous system  may handle about $0.8$ million forwarding rules~\cite{scale-bgp}. Intuitively, a future-proof flow lookup scheme should work well when handling millions of rules. However, this scale is 2 orders of magnitude larger than the largest data sets tested with existing solutions~\cite{TM, MT, TupleTree, CutTSS}.

Lastly, a flow lookup scheme should work well with rules having a large number of fields. OpenFlow has defined 45 match fields in its specification 1.5~\cite{of-specification}, which will  increase in number when new protocols are supported. Though OpenFlow switches will not deal with all match fields in one table, addressing this larger number of match fields at the algorithmic level will offer more flexibility and opportunities for system optimization. However, to our best knowledge, none of existing solutions have been verified to be able to work on rules with a large number of fields (i.e., at the scale of 100 or so).

\subsection{Our Contributions}
\label{sect:solution}
In this paper, we propose \emph{TupleChain}, a novel flow lookup scheme with both fast lookup and efficient updates, in view of not only computation complexity but also practical performance. Most importantly, its excellent scalability in the aforementioned three aspects is clearly demonstrated. We summarize the main contributions of this paper as follows:

\begin{enumerate}
\item We propose the use of a directed acyclic graph to track the connections between rule groups created following the \emph{tuple space search} model~\cite{tss}. With every rule group referred as a \emph{tuple}, we name           this graph a \emph{tuple graph}. On this basis, we introduce two types of lookup guidance, where the tuple connections thus the edges of \emph{tuple graph} are exploited to skip unnecessary searches in flow lookup.
\item We propose the use of tuple chains to trade off between the skipping of search operations and the maintaining of lookup guidance information. We group edges in \emph{tuple graph} into several tuple chains, where every chain is unidirectional and follows a monotonic sequence.
\item We present a series of algorithms based on tuple chains for flow lookup, rule updates and maintenance of guidance information.
\item We analyze the complexity to show that our scheme \emph{TupleChain} supports fast lookup and fast updates with the cost effectively amortized.
\item We extend  \emph{TupleChain} to further boost its performance with the optimal construction of  tuple chains, the adjustment of inner structure of  a tuple chain during tuple insertion,  as well as the increase of runtime speed and scalability.
\item We evaluate the performance of basic \emph{TupleChain} and extended scheme. Our proposed \emph{TupleChain}  is demonstrated to be able to handle  extremely high update frequency ($1$ million updates per second),          super large data sets ($10$ million rule sets) and a large number of fields ($100$ fields). When the scale for each becomes large in the experiments, our scheme is the only survivor in all cases, while keeping the             system throughput higher than $1$ million packets per second all the time.
\end{enumerate}

The rest of this paper is organized as follows. Section~\ref{sect:tss} reviews the literature work. Section~\ref{sect:tc} presents our motivation, core ideas and the basic scheme of \emph{TupleChain}, which is followed by a comprehensive complexity analysis in Section~\ref{sect:analysis}. Section~\ref{sect:optimization} introduces a series of technics to boost \emph{TupleChain}'s practical performance. Then we evaluate \emph{TupleChain} and some state-of-arts in Section~\ref{sect:exp} Finally, Section~\ref{sect:end} concludes the whole paper.
\section{Related Work}
\label{sect:tss}
We first review two categories of work related: 1) packet classification and 2) OpenFlow table lookup, and then summarize the differences between our proposal and related solutions.

\subsection{Packet Classification}
\label{sect:work:pc}
Hardware-based classifiers are widely adopted in industry. TCAM (Ternary Content Addressable Memory)-based solutions offer very fast speed, but their slow updates~\cite{tcam3} restrict their use.  With carefully designed structures and pipelines, FPGA (Field-Programmable Gate Array)-based solutions~\cite{fpga1,zhong2021efficient} are faster and more flexible than TCAM.

Most algorithmic solutions target to the software scenario. In the early days, in addition to trie-based solutions~\cite{class:trie:grid}, Cross-Producting~\cite{class:trie:grid} and Recursive Flow Classification~\cite{class:pro:rfc} attracted lots of interestes for their fast speed. However, neither of them works well with large data sets. Current state-of-the-art solutions are based on decision tree~\cite{ class:cut:effi, BytesCut}. They achieve fast speed with heuristic strategies at the cost of slow updates, and their performance varies a lot across different data sets~\cite{he2014meta}.

\subsection{OpenFlow Table Lookup}
\label{sect:work:of}
Many classification algorithms only work with static sets of flows, or have expensive incremental update procedures, making them unsuitable for dynamic OpenFlow flow tables due to their slow updates. For a better trade-off between lookup and update, \emph{PartitionSort}~\cite{PS} divides rules into sortable rulesets, which supports both fast search and fast update by utilizing the binary search tree. On the other hand, the \emph{Bloom Filter Intersection (BFI)}~\cite{BFI} follows the basic mode of \emph{Bit Vector}~\cite{BV}, but represents the results on individual fields as bloom filters. It can achieve fast lookup with efficient updates. However, it can not scale well to the number of rules due to the fixed size of bloom filters.

\emph{Tuple space search (TSS)}~\cite{tss} is designed for packet classification but is well suited to flow lookup. Its core idea is to divide a large table into groups, where rules in the same group share the same mask for the fields to match. A flow lookup needs to search all the groups with a hash probe on each, and output the best result. This scheme is proposed with several extensions, among which the \emph{pruned tuple search (PTS)} is the fastest. It processes individual fields and combines the results to pick up the candidate groups to search in the next step. In contrast, \emph{tuple search using a balancing heuristic (TSBH)} focuses on reducing the complexity. By repeatedly selecting the best group to probe with a balancing heuristic, and skipping part of the  remaining groups according to the search on the selected one, the number of required probes can be sharply reduced. Its lookup complexity is $\mathcal{O}(m^{\log_32})$, where $m$ is the number of groups. 

Open vSwitch~\cite{ovs} adopts the basic \emph{TSS} scheme and improves its practical performance with a series of runtime pruning . We refer this scheme as \emph{Priority Sorted Tuple Search (PSTS)}. \emph{TupleMerge (TM)}~\cite{TM} aims to reduce the number of rule groups at the construction time. Its core idea is to move the rules in some groups to others to restrict the collision rate below a threshold. \emph{MultilayerTuple (MT)} ~\cite{MT} and \emph{TupleTree} ~\cite{TupleTree} inherit this merging approach. Both methods merge all tuples into several "big" tuples, which causes collisions. So they rearrange the rules at collision entry into a substructure, forming a "multilayer" or "tree" architecture. The difference between the two methods is the way of merging. \emph{MT} adopts a static approach while \emph{TupleTree} uses a heuristic one. \emph{CutTSS}~\cite{CutTSS} exploits the joint use of decision tree and TSS, where it first divides the rule set into several groups  and then uses TSS for the groups that contain many overlapping rules.

\subsection{Summary of Prior Arts and Our Solutions}
Because of its generality of fields, linear memory cost and constant update complexity, the \emph{TSS} model has been proven to be a good starting point to develop a flow lookup scheme with multifaceted scalability. However, its performance may suffer when the number of groups is large, as it has to probe all groups. \emph{PTS}, \emph{TSPS}, \emph{TM}, \emph{MT}, \emph{TupleTree} all aim at improving the practical performance, but none of them provides the worst-case performance guarantee since they have the same complexity as \emph{TSS}. Although \emph{TSBH} makes a great effort to lower the lookup complexity, its practical performance is not that good and its update is too complicated. \emph{CutTSS} gains performance via cutting but its update and memory cost also deteriorates.
In this work, we start from \emph{TSS} as well, but propose a new scheme \emph{TupleChain} to conquer the performance challenge. By exploiting the connections between rule groups and carefully trading off between the lookup speed and update speed, our approach achieves both fast lookup and fast update while guaranteeing  the worst-case lookup performance and average update performance.

\section{TupleChain: Binary Search on Chained Tuples}
\label{sect:tc}

In this section, we first introduce the basic model
and our motivation in Section~\ref{sect:tc:motivation}, then the
concepts of tuple graph and lookup guidance information in Section~\ref{sect:tc:graph}. We further present the basic scheme of \emph{TupleChain}, as well as its algorithms for packet lookup and rule update in Section~\ref{sect:tc:chains}.

\subsection{Basic Model and  Motivation}
\label{sect:tc:motivation}
We denote a $d$-field rule $r$ as ($\vec f, \vec m, pri$),
where the integer $pri$ denotes the rule's priority, the $d$-dimensional vectors $\vec f$ and $\vec m$
represent its field to match and mask respectively.  Before
a flow lookup, the search key $\vec k$ with the corresponding $d$ fields is generated based on the incoming
packet $p$ and some metadata. A packet $p$ matches a rule $r$
if and only if $\vec k~\&~\vec m = \vec f$.
 Following the basic tuple space search (\emph{TSS}) model, a flow table
is divided into \emph{tuples} (as shown in Fig.~\ref{fig:rules} and
  Fig.~\ref{fig:tuples}), each is identified by a mask (a $d$-dimensional vector) and associated with a hash table (keyed by $d$-dimensional
vectors). For simplicity, we refer to an entry of a tuple's
  hash table as the ``tuple's entry''.

With \emph{TSS}, flow lookup is performed by searching all tuples and returning the one
with the highest priority among all matched rules. Fig.~\ref{fig:lookup:tss}
shows the \emph{TSS} constructed with the rules shown in
  Fig.~\ref{fig:rules}, where every tuple is denoted as
a labelled cycle. When processing a packet, all 6 tuples are searched.

\begin{figure}
  \begin{minipage}[b]{0.6\linewidth} 
    \centering
    \includegraphics[width=\linewidth]{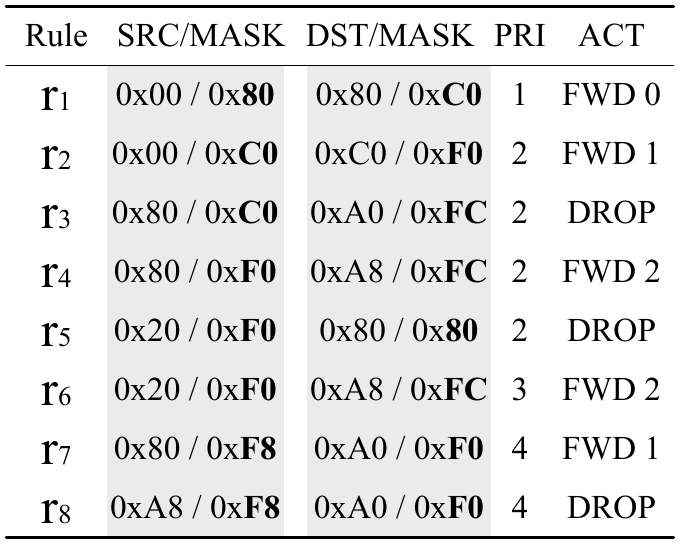}
    \caption{simplified 2-field rules}
    \label{fig:rules}
  \end{minipage}%
  \begin{minipage}[b]{0.05\linewidth}\end{minipage}%
  \begin{minipage}[b]{0.35\linewidth}
    \centering
    \includegraphics[width=\linewidth]{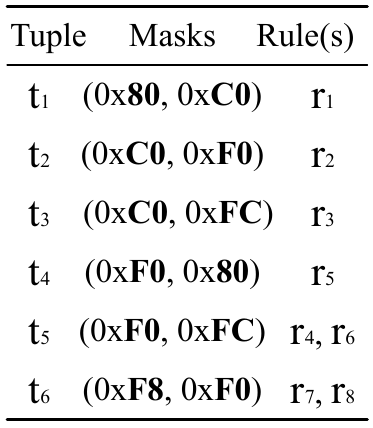}
    \caption{constructed tuples}
    \label{fig:tuples}
  \end{minipage}
\end{figure}

\begin{figure}[tbp]
  \centering
  \begin{minipage}[b]{0.47\linewidth} 
    \centering
    \includegraphics[width=\linewidth]{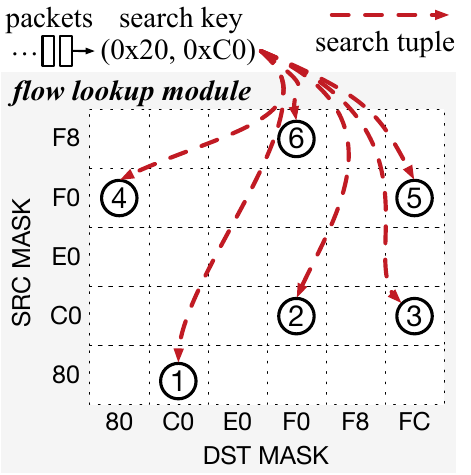}
    \caption{lookup with \emph{TSS}.}
    \label{fig:lookup:tss}
  \end{minipage}%
  \quad
  \begin{minipage}[b]{0.47\linewidth}
    \centering
    \includegraphics[width=\linewidth]{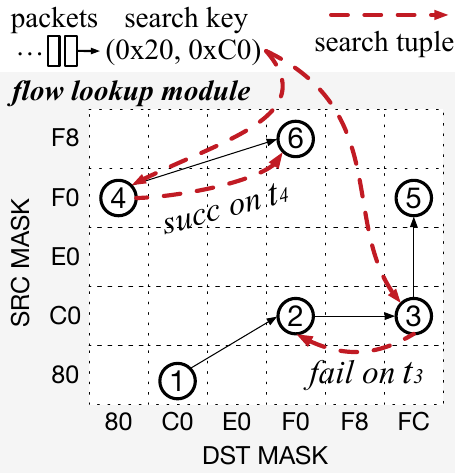}
    \caption{lookup with \emph{TupleChain}.}
    \label{fig:lookup:tc}
  \end{minipage}
\end{figure}

Our basic idea is to exploit the connections between tuples to avoid unnecessary searches.
   We propose the use of  \emph{TupleChain} to organize tuples into a
set of chains, where two consecutive tuples on a chain have
a unidirectional connection. Though all chains
will be searched, the lookup on each chain can be
well-organized to skip unnecessary searches.
In the example of Fig.~\ref{fig:lookup:tc}, 6 tuples form 2 chains to be searched for
the incoming packet. The lookup on the first chain starts from
  $t_{4}$ and ends after searching $t_{6}$. When searching along the second chain, the
  miss of the first probe on $t_{3}$ directs the lookup to
  $t_{2}$. $t_{5}$ is skipped, because all its entries have left their
  \emph{markers} on $t_{3}$ and the miss on $t_{3}$ indicates that the
  markers of $t_{5}$'s entries  also cannot be
  matched. 



As the probe on $t_{2}$ succeeds, we can skip $t_{1}$, because all the rules from $t_{1}$ that could be matched by the incoming packet must
  have been reported as \emph{hints} to $t_{2}$. Accordingly, the
  lookup on this chain is terminated. We will introduce the details of
  markers and hints in the Section~\ref{sect:tc:graph}.

For this approach to work, we need to answer a set of questions: 1) How to set up and make
use of the connections between tuples? 2) How to construct chains? 3)
How to organize the search along each chain and how fast will the
search be? How to perform rule updates without impacting the
connections between tuples? We will answer these questions in the rest of this section.

\subsection{Tuple Graph and Lookup Guidance}
\label{sect:tc:graph}

We first introduce the \emph{tuple graph}, a directed acyclic
  graph that tracks the connections between any pair of tuples, as
  well as two types of information, \emph{markers} and \emph{hints},
  to guide more efficient flow lookups with a tuple graph.

Given two tuples $t_{x}$ and $t_{y}$, if $t_{x}$'s mask is contained
in $t_{y}$'s on every field, we denote this relation as $t_{x} <
t_{y}$. In Fig.~\ref{fig:tuples}, $t_{1}<t_{2}$ because every field of $t_{1}$'s mask (0x80, 0xC0), i.e.,($\textbf{1}0000000, \textbf{11}000000$) in binary format, is contained in
the corresponding field of $t_{2}$'s mask (0xC0, 0xF0), i.e.,
($\textbf{11}000000, \textbf{1111}0000$). The prefix length associated with the mask of $t_{1}$ is (1, 2) and the prefix length of  $t_{2}$ is  (2, 4). Obviously, rules in $t_{2}$ are more specific and cannot be matched if a search cannot match ones in  $t_{1}$. Given a set of tuples,
  we construct a \emph{tuple graph} by making every tuple a vertex,
  and adding an edge from $t_{x}$ to $t_{y}$ if $t_{x} < t_{y}$. On
the tuple graph, the search of tuples is transformed into the traverse of
vertexes.

To reduce the vertex thus tuple to visit in performing the flow lookup with
  the tuple graph, we leave and apply {\em markers} and {\em hints}
  along its edges backward and forward respectively. Given an edge
  from $t_{x}$ to $t_{y}$, from any entry $e_{y}$ of $t_{y}$ we
  create an entry $e_{x}$ and insert it into $t_{x}$,
  whose key is made by masking the key of $e_{y}$ with the mask of
  $t_{x}$. This makes the key of $e_{x}$ part that of $e_{y}$. We call $e_{x}$ the \emph{marker} of $e_{y}$ and $e_{y}$
  the \emph{owner} of $e_{x}$. As an entry,  a marker can also hold a rule belonging to the tuple $t$ it is inserted into.
   It can further leave markers in other tuples that have edges to $t$, one for
  each tuple. Besides, multiple entries from
  different tuples can share the same marker in a tuple as well. In the example of
Fig.~\ref{fig:hints}, along the edge from $t_{3}$ to $t_{5}$, an entry
$e_{1}$ in $t_{5}$ can leave a marker in $t_{3}$. By masking its key
(0x20, 0xA8) with $t_{3}$'s mask (0xC0, 0xFC), a new key (0x00, 0xA8)
is generated to associate with the entry $e_{2}$ in
$t_{3}$. Initially, $e_{2}$ does not hold any rule that belongs to $t_{3}$
but can further leave its markers to $t_{2}$ ($e_{3}$) and $t_{1}$
($e_{4}$) respectively.
The key~(0x20, 0xA8) in $t_{5}$  is more specific than the key (0x00, 0xA8) in $t_{3}$,
so we can get an important property for markers: {\em If a packet
succeeds in matching an entry in a tuple, it must be able to match all markers of this entry. 
  On the contrary, if a packet fails to match an entry, it must be unable to match all owners of this entry.} 



On the other hand, along an edge from $t_{x}$ to $t_{y}$, any marker in $t_{x}$ can report
a \emph{hint} to its owner in $t_{y}$ to help cut short the search path. 
A hint of  a marker is the best rule that can be matched by a packet with this marker and all of its own markers. More specifically, the rule held in an entry (if
  any) and all hints the entry received from its markers form the candidates
  to determine the entry's \emph{best rule} for match,  from which the one with the highest priority is selected. In Fig.~\ref{fig:hints},
two markers $e_{3}$ and $e_{4}$ report their hints, $r_{2}$ and $r_{1}$ respectively, to their owner $e_{2}$. Since $e_{2}$ does not
hold any rule in $t_{3}$, it selects $r_{2}$ (assuming $r_{2}$ has a higher priority than $r_{1}$) as its \emph{best rule}, which is further reported to its owner $e_{1}$. However, $e_{1}$ holds a rule $r_{6}$
whose priority is higher than $r_{2}$, so $e_{1}$'s \emph{best rule} would stay as $r_{6}$. With the hint, we have another important
property: {\em If a packet matches an entry in a tuple, there is
no need to search and match against the tuples hosting its markers, since  the best match with the corresponding rule has already been included  by the hint from its markers.} 

\begin{figure}[tbp]
  \centering
  \begin{minipage}[b]{0.46\linewidth} 
    \centering
    \includegraphics[width=\linewidth]{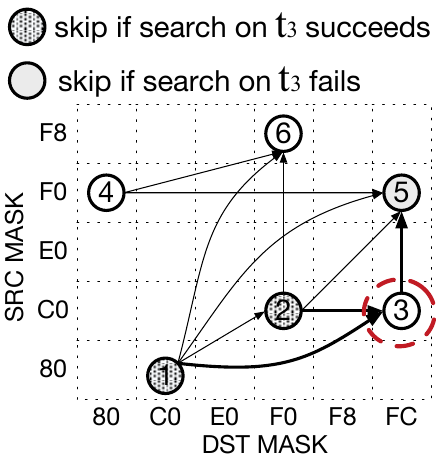}
    \caption{a tuple graph}
    \label{fig:graph}
  \end{minipage}%
  \begin{minipage}[b]{0.54\linewidth}
    \centering
    \includegraphics[width=\linewidth]{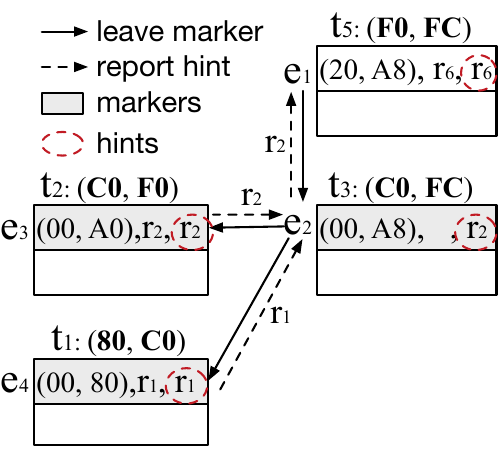}
    \caption{part of lookup hints for $t_{3}$.}
    \label{fig:hints}
  \end{minipage}
\end{figure}

Based on these two types of guiding information, we can reduce unnecessary
search. Given an edge from $t_{x}$ to $t_{y}$ on a tuple
  graph, for every entry in $t_{y}$, we leave its marker in $t_{x}$
  and update its \emph{best rule}.
    We can make the following reductions. In the case that $t_{x}$ is searched first but
    the probe fails, it's safe to skip $t_{y}$ because all its
    entries have left their markers  in $t_{x}$. A further search is not needed if a search with a coarser prefix fails.     On the other hand,  when $t_{y}$ is checked first and it gives a match with the
    entry $e$, we can skip $t_{x}$ as the only entry the packet can
    match from $t_{x}$ must be $e$'s marker, which has reported its hint to $e$. 




Fig.~\ref{fig:graph} shows an example
of utilizing the lookup hints in $t_{3}$ to avoid unnecessary
searches. Once the search on $t_{3}$ succeeds, the searches in tuples $t_{1}$ and $t_{2}$ can be
skipped,  while a mismatch of $t_{3}$ allows us to skip the lookup in $t_{5}$. 

\subsection{Maintenance of Markers and Hints}
\label{sect:tc:maintain}
To maintain markers and hints, we extend the
design of the hash table entries. In our scheme,
every entry $e$ of tuple $t$'s hash table is composed of a
$d$-field \emph{key} (as the identity), a \emph{rule} belonging
  to $t$, a \emph{hint} and an \emph{owner list}. To save space
  and for the storage alignment, the \emph{rule} and the \emph{hint} are
  the pointers that point to corresponding rules, while the \emph{owner
    list} is the pointer pointing to a linked list that stores the
  pointers pointing to all entries sharing $e$ as a marker. An empty
  owner list indicates that the entry is not a marker.

An entry $e$ of a tuple $t$ will be created in two cases: 1) inserting a rule
belonging to $t$ or 2) inserting another entry's marker into
$t$. In the first case, $e$'s rule is set as the one
  being inserted while its hint and owner list are left
  empty. The owner list will be updated when an entry of another tuple leaves its marker
    to $e$ of the tuple $t$ at a later time.
    In the second case, $e$ is created to host the marker of an entry from another tuple, with the entry inserted into its owner list. The rule and the
  hint of $e$  are left empty initially, and the rule will be updated when a rule is inserted into $e$ of $t$.


%
%

\subsection{Tuple Chains and the Lookup Algorithm}
\label{sect:tc:chains}

Benefiting from lookup guidance, flow lookup with the tuple graph
  can be performed more efficiently. However, there are two
  drawbacks. First, once the search of a tuple is done, current lookup
  guidance can tell which tuples can be skipped, but not which one is
  the best to search in the next step. Actually, for different lookup
  requests there may be different optimal probing paths on the tuple
  graph. It's hard to pre-compute such optimal paths for future
  lookups. Besides, maintaining too much lookup guiding information will
  make rule updates extremely complicated.

To address these issues, we propose the construction of \emph{TupleChain}
  where we break a tuple graph into disjoint chains that cover all
  tuples. This scheme brings in three benefits:
\begin{enumerate}
\item All tuples in a chain form a monotonic
  sequence with the ``$<$'' operator, enabling an efficient binary search.
\item Every rule update involves only a single chain, thus the update can be kept within this chain.
\item A tuple has at
  most one preceding tuple and one succeeding tuple. Thus, an
  entry has at most one marker, facilitating the marker
  maintenance. Although a marker can still be shared by multiple
  entries where it has to report hints, the overall cost across all
  tuples on the same chain can be amortized (see the proof in Section~\ref{sect:analysis}).
\end{enumerate}

Essentially, all these chains form a disjoint path cover of the
  tuple graph. We propose an  optimal method to form chains in
  Section~\ref{sect:construct-adaptation}. Every chain is maintained as a red-black tree. For each tuple node, we rename its left and right children pointers as ``fail'' and ``succ'' respectively, and add two additional
pointers ``prev'' and ``next'', to point to its preceding and
succeeding tuples to facilitate the maintenance of markers and
hints.

The flow lookup with \emph{TupleChain} is simple. Every
chain is searched by performing a tree traversal, and the output is the best
result returned by searches from all chains. As described in \lag{lookup:basic}, the
search on every chain starts from the tree root. In every step, the
search is directed to the next node following the ``succ'' pointer or stops following
the ``fail'' pointer, based on whether the current node yields a
match or not.

\begin{algorithm}[htbp]
\caption{\label{algo:lookup:basic} flow Lookup with \emph{TupleChain}}
\small
\LinesNumbered
\KwIn{$packet$}
\KwOut{$bestRule$}
\BlankLine
$bestRule =$ DEFAULT\_RULE;\\
\ForEach{$chain$}{
        $tp = chain.root$;\\
        \While{$tp$}{
                  $e = tp.table.$search ($packet.\vec k$);\\
                  CHECK\_UPDATE\_BEST\_RULE($bestRule,~e$);\\
                  $tp = e~?~tp.succ~:~tp.fail$;\\
        }
}
\end{algorithm}

\subsection{Rule Updates with TupleChain}
\label{sect:tc:update}
Here, we discuss how to update a rule with
\emph{TupleChain}. We first introduce rule insertion and rule deletion
with high level logics, and then dive into details of dealing with
markers and hints. We close this subsection
with tuple insertion / deletion.

\subsubsection{Rule insertion}
\label{sect:tc:update:insert}

When inserting a rule $r$,  we shall insert it into an entry $e$ of its corresponding tuple $t$ , and then build the marker link and update the hint of $e$. It starts  by finding out tuple t which has the same mask with $r$. If entry $e$ does not exist, it will be created and leave marker in the preceding tuple of chain. Then $e$'s rule and hint are set to $r$. The marker of $e$ returns its hint, which can be used to update $e$'s hint. This updated hint will be further reported to $e$'s owner(s).

\begin{algorithm}[htbp]
\caption{\label{algo:insert:rule} insert a rule with \emph{TupleChain}}
\small
\LinesNumbered
\KwIn{$rule$}
\BlankLine
$t=$ find\_or\_insert ($rule.\vec m$);\\
$e=t.table.$insert ($rule.\vec f$);~$e.rule = e.hint = rule$;\\
\If{$k=$ leave\_marker ($e,~t.prev$)}{
    $e.hint = k.hint.pri > rule.pri$ ? $k.hint$ : $rule$
}
report\_hint ($e$);\\
\end{algorithm}

\subsubsection{Rule deletion}
\label{sect:tc:update:delete}

When deleting a rule $r$,  we shall delete it from a corresponding entry $e$ and update the hint of $e$. It starts by looking for $e$ in a tuple. The deletion process will terminate if no entry $e$ is found. 
Otherwise, the  rule will be cleared from $e$. If $e$ is not a marker, it will be deleted directly. If $e$ is a marker, it will update its hint and report the change to its owner. If $e$'s marker exists, $e$'s hint will be set as its marker's hint. Otherwise $e$'s hint will be cleared.

\begin{algorithm}[htbp]
\caption{\label{algo:delete:rule} delete a rule with \emph{TupleChain}}
\small
\LinesNumbered
\KwIn{$rule$}
\BlankLine
\If{($t=$ find ($rule.\vec m$)) \textbf{AND} ($e=t.table.$search ($rule.\vec f$))}{
    \If{$e.rule.$equals ($rule$)}{
        \If{$e.owners$ is empty}{
             delete\_marker ($e,~t.prev$);\\
             delete\_tuple\_if\_empty($t.$erase ($e$));\\
        }
        \Else{
             $k = $ obtain\_marker ($e, ~t.prev$);\\
             $e.hint = k$ ? $k.hint$ : NULL;~$e.rule=$ NULL;\\
             report\_hint ($e$);\\
        }
   }
}
\end{algorithm}

\subsubsection{Marker management}
\label{sect:tc:update:marker}

Marker is used for maintaining tuple chains. Its management is related to rule insertion and deletion. The process of maker creation is more involved. As described in \lag{leave:marker}, the procedure of leaving a marker is recursively performed, until reaching an existing entry, or finding no preceding tuple any more. Finding or deleting a marker is simpler, where just one hash table operation is required.

\begin{algorithm}[htbp]
\caption{\label{algo:leave:marker} leave the marker for an entry}
\small
\LinesNumbered
\KwIn{$e$ /* leave the marker for this entry */}
\KwIn{$t$ /* the target tuple of $e$'s marker */}
\KwOut{$k$ /* return this as $e$'s marker */}
\BlankLine
\If{$t$ \textbf{AND} ($k=t.table.$find ($e.\vec f~\&~t.\vec m$)) is NULL}{
     $k=t.table.$insert ($e.\vec f~\&~t.\vec m$);\\
     $k'$ = leave\_marker ($k, t.prev$);\\
     $k.hint = k'$ ? $k'.hint$ : NULL;\\
}
$k.onwers.$add ($e$);\\
\end{algorithm}

\subsubsection{Hint management}
\label{sect:tc:update:owner}
Once a marker's hint is updated, the change must be reported to its owner(s).
To avoid search through the whole tuple for  the owners of a given marker, as introduced in Section~\ref{sect:tc:graph}, we store all the owners of a marker in a list to trade the memory for the update speed. Fortunately, this is fairly cost-effective (the detailed analysis is presented in section~\ref{sect:analysis:memory}). 


\subsubsection{Tuple insertion / deletion}
\label{sect:tc:update:tuple}
Tuple insertion or deletion will be triggered when its first rule is inserted or all rules have been deleted.
With a chain maintained as a red-black tree, inserting or deleting
a tuple is faster than performing lookup, as no hash computation is required.
For tuple deletion, no additional operation is required other
  than removing the tuple from the chain. However, to insert
a newly created tuple, we may have to probe all existing chains to
determine which one it should be inserted into (some greedy strategies
of selecting chains will be introduced in
section~\ref{sect:construct-adaptation}), or create a new chain for it
if no one can host it. A newly created tuple is empty at its
  insertion. After it is inserted into a chain, every entry of its
  succeeding tuple (if any) will leave their markers in this tuple,
  which may trigger recursive marker insertion in \lag{leave:marker}.
\section{Complexity Analysis}
\label{sect:analysis}
In this section, we make a comprehensive theoretical analysis to
understand how effective \emph{TupleChain} will be and where its
bottlenecks are. These analyses will serve as a guideline for us to
improve its performance.

Suppose $n$ $d$-field rules fall into $m$ tuples, and the tuples form
a tuple graph, which is then broken into $l$ chains. Among these
  chains, the ``biggest'' one (which holds the largest
  number of rules) contains $n'$ rules, and the ``longest'' one is made up
  of $m'$ tuples.
  We analyze the performance of our \emph{TupleChain} scheme accordingly.

Same as most TSS inspired schemes, the unit operation of flow
  lookup and rule updates with \emph{TupleChain} is the hash with
  $d$-field keys. The cost of this operation linearly scales with the
  number of fields, so does the cost of storing $d$-field rules.
  We ignore the parameter $d$ in the following
  analyses for simplification. The results of our evaluations are
compared with other schemes in Table~\ref{tb:complexity}.

\begin{table}[tbp]
\caption{Complexity Comparison}
\begin{center}
\setlength{\tabcolsep}{0.6em}
\renewcommand{\arraystretch}{1.5}
\begin{tabular}{|c|c|c|c|c|}
\hline
  & lookup & memory & \multicolumn{2}{|c|}{update}\\ \cline{4-5}
& & & average & worst \\
\hline
\textit{TSPS}& \lcom{$d\times m$}
& \lcom{$n$} & \multicolumn{2}{|c|}{\lcom{$d$}}\\
\hline
\textit{PTS}& \lcom{$d\times m$}
& \lcom{$d\times n$} & \multicolumn{2}{|c|}{\lcom{$d\times n$}}\\
\hline
\textit{TM}& \lcom{$d\times m''$}
& \lcom{$n$} & \multicolumn{2}{|c|}{\lcom{$d\times m''$}}\\
\hline
\textit{TSBH}& \lcom{$d\times m^{\log_32}$}
& \lcom{$m\times n$} & \multicolumn{2}{|c|}{\lcom{$m\times n$}}\\
\hline
\textit{TC}  & \lcom{$d\times l\times\log_2\frac{m}{l}$}
& \lcom{$m\times n'$} & \lcom{$d\times \frac{m}{l}$} & \lcom{$m'\times n'$}\\
\hline
\multicolumn{5}{l}{$^{\mathrm{a}}$$m'' < m$ but an addtiional linear probe is
  required for searching a tuple.}
\end{tabular}
\label{tb:complexity}
\end{center}
\end{table}

\subsection{Time Complexity of Flow Lookup}
\label{sect:analysis:lookup}
Flow lookup with \emph{TupleChain} needs to search
all $l$ chains of $m$ tuples, with a binary
search on each chain. We denote the lookup cost as $F(m,
l)$, which is the number of tuples that will be visited with a
  hash probe.

\begin{theorem}\label{the:tc:lookup}
$F(m, l)$ has an upper bound $(l\times (1 + \log_2(\frac{m}{l})))$.
\end{theorem}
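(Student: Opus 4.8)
The plan is to bound $F(m,l)$ by bounding the cost incurred on each chain separately and then summing. Since flow lookup traverses all $l$ chains independently, $F(m,l) = \sum_{i=1}^{l} F_i$, where $F_i$ is the number of hash probes performed while searching the $i$-th chain. So it suffices to show that a single chain of $m_i$ tuples requires at most $1 + \log_2(m_i)$ probes, and then combine the per-chain bounds with a convexity argument to get the claimed form $l\,(1 + \log_2(m/l))$.

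First I would analyze one chain. A chain is stored as a red-black tree on its $m_i$ tuples, and the lookup of Algorithm~\ref{algo:lookup:basic} walks a single root-to-node path: at each visited node it does one hash probe and then follows either the ``succ'' or the ``fail'' pointer, never backtracking. Hence the number of probes on that chain equals the length of the path taken, which is at most the height of the tree. Since the tree is balanced (red-black), its height is at most $\log_2(m_i + 1) \le 1 + \log_2 m_i$ for $m_i \ge 1$ — or, more cleanly, I would simply assert the standard bound that a balanced search tree over $m_i$ keys has depth $O(\log_2 m_i)$ and pin down the constant to get $F_i \le 1 + \log_2 m_i$. The key structural fact making this legitimate (as opposed to a general tree search) is that the monotonic ``$<$''-ordering of tuples on a chain, together with the marker/hint properties from Section~\ref{sect:tc:graph}, guarantees correctness of the one-directional walk: a miss lets us discard the entire more-specific subtree (all those tuples left markers here), and a hit lets us discard the entire coarser subtree (the hint already carried up the best coarser rule). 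I would note this so the reader sees the search really is just one root-leaf path, not an exhaustive traversal.

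Next I would sum over chains. We have $F(m,l) \le \sum_{i=1}^{l} \bigl(1 + \log_2 m_i\bigr) = l + \sum_{i=1}^{l}\log_2 m_i$, subject to $\sum_{i=1}^{l} m_i = m$. Because $\log_2$ is concave, Jensen's inequality gives $\frac{1}{l}\sum_{i=1}^{l}\log_2 m_i \le \log_2\!\bigl(\frac{1}{l}\sum_{i=1}^{l} m_i\bigr) = \log_2(m/l)$, hence $\sum_{i=1}^{l}\log_2 m_i \le l\log_2(m/l)$. Substituting back yields $F(m,l) \le l + l\log_2(m/l) = l\,(1 + \log_2(m/l))$, which is exactly the claimed upper bound.

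The main obstacle is the per-chain step: being careful that the balanced-tree height bound is applied with the right constant and the right base-$2$ logarithm so that the final expression matches $l(1+\log_2(m/l))$ on the nose rather than up to a constant factor, and making sure the edge cases ($m_i = 1$, or a chain that is a degenerate path rather than a full red-black tree) are covered — in the worst case a chain could be a path of length $m_i$, which would blow the bound, so I must invoke the fact that chains are explicitly stored as \emph{balanced} (red-black) trees, as stated in Section~\ref{sect:tc:chains}. The summation step is routine once concavity of $\log$ is invoked, so I would state it briefly rather than belabor it.
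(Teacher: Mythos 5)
Your proposal is correct and follows essentially the same route as the paper: decompose $F(m,l)$ as a sum of per-chain costs, bound each chain's cost by $1+\log_2 m_i$ via the balanced binary search, and then combine the per-chain bounds under the constraint $\sum_i m_i = m$ --- your use of Jensen's inequality on the concave $\log_2$ is exactly the arithmetic--geometric mean inequality the paper applies to $\prod_i m_i$. Your added remarks on why the search is a single root-to-leaf path (and the caveat about the red-black height constant, which the paper also glosses over by simply asserting ``binary search'') are elaborations rather than a different argument.
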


\begin{proof}
Suppose the $i$-th chain has $m_{i}$ tuples, and its lookup cost
  is denoted as $F_{i}(m,l)$. Because of binary search,
$F_{i}(m,l) \le (1 + \lfloor \log_2m_{i} \rfloor)$. For $l$
  chains in total, we have
\begin{align}\label{eq:tc:lookup:1}
    F(m,l) = \sum_{i=1}^{l}F_{i}(m,l) & \le \sum_{i=1}^{l}(1 + \lfloor \log_2m_{i} \rfloor) \notag\\
    &\le l + \sum_{i=1}^{l}\log_2m_{i}\notag\\
    & = l +  \log_2\prod_{i=1}^{l} m_{i}
\end{align}
According to the arithmetic-geometric average inequality,
\begin{align}\label{eq:tc:lookup:2}
  \prod_{i=1}^{l} m_{i} \le (\frac{\sum_{i=1}^{l}m_{i}}{l})^{l} = (\frac{m}{l})^{l}
\end{align}
Combining \lieq{tc:lookup:1} and \lieq{tc:lookup:2}, we get $F(m,l) \le l\times (1 + \log_2(\frac{m}{l}))$.
\end{proof}

To gain more insight into this upper bound, we treat $m$ as a
constant to analyze the function $f(l)=l\times (1 +
\log_2(\frac{m}{l}))$, where $l \in [1, m]$. Its first-order and second-order
derivatives are
  \begin{align*}\label{eq:tc:lookup:trend}
    f'(l) &=-\log_2l + \log_2(\frac{2m}{e})\\
    f''(l) & = -\log_2e \times l^{-1}
  \end{align*}
$f^{''}(l) < 0$ always holds, so $f'(l)$ is monotonically decreasing.
Letting $f'(l)=0$, we get $l=\frac{2m}{e}$. Accordingly, $f'(l)>0$ holds
  when $l$ increases from 1 to $\frac{2m}{e}$, so $f(l)$
monotonically increases from $f(1)$ to $f(\frac{2m}{e})$. While $l$
continuously increases to $m$, $f'(l)<0$ holds instead, and $f(l)$
monotonically decreases to $f(m)$. A sketch of
$f(l)$ is shown in Fig.~\ref{fig:fnc}.

\begin{figure}[tbp]
\centerline{\includegraphics[width=.5\linewidth]{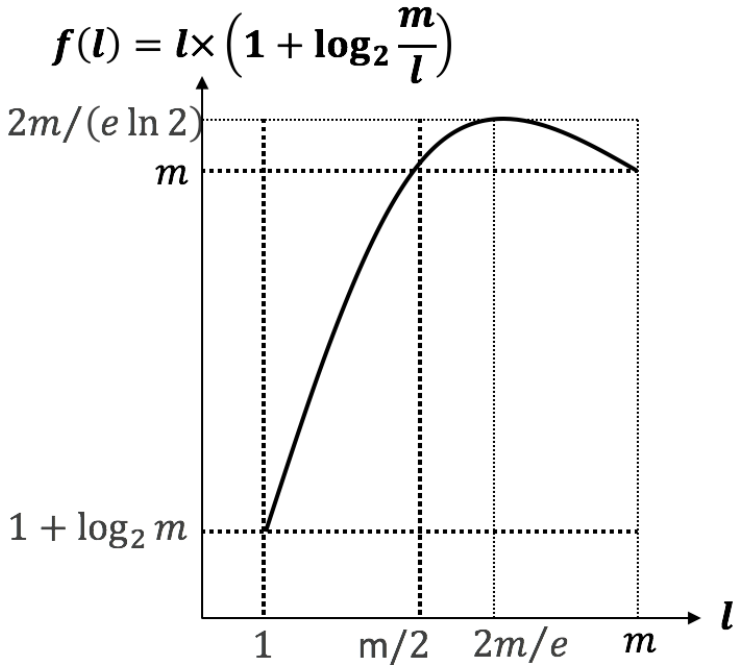}}
\caption{A sketch of the function $f(l)=l\times (1 + \log_2(\frac{m}{l}))$.}
\label{fig:fnc}
\end{figure}

Noted from the curve, $f(l) = m$ has two roots across $[1, n]$. It is
easy to verify that they are $l=\frac{m}{2}$ and $l=m$ respectively.
Since $\frac{m}{2} < \frac{2m}{e}$, $f'(l)>0$ holds across $l\in[1,
\frac{m}{2}]$, namely $f(l)$ monotonically increases. Then, we have:

\begin{corollary}\label{cor:tc:lookup:complexity}
When $l < \frac{m}{2}$ holds, \emph{TupleChain}'s lookup complexity is
$\mathcal{O}(l \times \log_2\frac{m}{l})$.
\end{corollary}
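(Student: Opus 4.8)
The plan is to derive Corollary~\ref{cor:tc:lookup:complexity} directly from Theorem~\ref{the:tc:lookup} together with the monotonicity analysis of $f(l)=l\times(1+\log_2(\frac{m}{l}))$ that has just been carried out. The upper bound from the theorem gives $F(m,l)\le f(l)$, so it suffices to argue that in the regime $l<\frac{m}{2}$ the additive term $l$ inside $f(l)$ is dominated by $l\times\log_2\frac{m}{l}$, which then collapses the bound to the claimed $\mathcal{O}(l\times\log_2\frac{m}{l})$.

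First I would recall that, by Theorem~\ref{the:tc:lookup}, $F(m,l)\le l\times(1+\log_2\frac{m}{l}) = l + l\times\log_2\frac{m}{l}$. Next I would observe that when $l<\frac{m}{2}$ we have $\frac{m}{l}>2$, hence $\log_2\frac{m}{l}>1$, so that $l < l\times\log_2\frac{m}{l}$. Consequently $F(m,l)\le l + l\times\log_2\frac{m}{l} < 2\,l\times\log_2\frac{m}{l}$, which is exactly $\mathcal{O}(l\times\log_2\frac{m}{l})$. I would also point back to the curve in Fig.~\ref{fig:fnc} and the derivative computation: since $\frac{m}{2}<\frac{2m}{e}$, on the interval $l\in[1,\frac{m}{2}]$ the derivative $f'(l)>0$, so $f$ is increasing there and no degeneracy (such as $f$ flattening to the constant $m$) occurs before $l$ reaches $\frac{m}{2}$; this is what justifies restricting attention to $l<\frac{m}{2}$ as a genuinely sub-$m$ (i.e., sub-linear in the number of tuples) lookup cost rather than the trivial $f(l)=m$ at the endpoints $l=\frac{m}{2}$ and $l=m$.

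The main obstacle, such as it is, is not a deep one: it is making precise the claim that $l$ is asymptotically negligible against $l\times\log_2\frac{m}{l}$ uniformly over the stated range, since $\log_2\frac{m}{l}$ can be arbitrarily close to $1$ as $l\uparrow\frac{m}{2}$ — but it never drops below $1$ on this half-open interval, so the factor-of-two bookkeeping above is tight enough and no further case analysis is needed. One should just be slightly careful that the corollary is a statement about the order of growth of the worst-case bound $f(l)$, not of $F(m,l)$ itself; since $F(m,l)\le f(l)$ and $f(l)=\Theta(l\log_2\frac{m}{l})$ on $[1,\frac{m}{2}]$ (the lower bound $f(l)\ge l\log_2\frac{m}{l}$ being immediate), the $\mathcal{O}$ claim for the lookup cost follows. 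I would close by noting that this is the practically relevant regime, since a well-formed disjoint path cover of the tuple graph will typically produce far fewer than $\frac{m}{2}$ chains, so the corollary captures the expected operating point of \emph{TupleChain}.
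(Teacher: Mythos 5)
Your argument is correct and is essentially identical to the paper's own proof: both derive $\log_2\frac{m}{l}>1$ from $l<\frac{m}{2}$ and use it to absorb the additive $l$ term of Theorem~\ref{the:tc:lookup}'s bound into $2\,l\times\log_2\frac{m}{l}$. The extra remarks on monotonicity and tightness merely restate the paper's surrounding discussion of $f(l)$ and are not needed for the corollary itself.
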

\begin{proof}
$l < \frac{m}{2} \Longrightarrow \log_2\frac{m}{l} > 1$, then, $F(m,l)
< 2l \times \log_2\frac{m}{l}$, thus $\mathcal{O}(F(m,l))=\mathcal{O}(l \times \log_2\frac{m}{l})$.
\end{proof}

\begin{corollary}\label{cor:tc:lookup:best} Once the tuple graph can be broken into chains with each having fewer than half
  the number of tuples (i.e., $l < \frac{m}{2}$), \emph{TupleChain}
  offers a lower lookup complexity than \emph{TSS}
  ($\mathcal{O}(m)$), and the fewer the number of chains, the lower
the lookup complexity.

\end{corollary}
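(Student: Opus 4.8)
The plan is to establish the two claims of Corollary~\ref{cor:tc:lookup:best} as direct consequences of the analysis already carried out for $f(l) = l \times (1 + \log_2(\frac{m}{l}))$ together with Corollary~\ref{cor:tc:lookup:complexity}. First I would recall the upper bound $F(m,l) \le f(l)$ from Theorem~\ref{the:tc:lookup}, and the fact, verified above from the sketch of $f$ in Fig.~\ref{fig:fnc}, that $f$ is strictly increasing on the interval $[1, \frac{m}{2}]$ since $\frac{m}{2} < \frac{2m}{e}$ and $f'(l) > 0$ there. The comparison with \emph{TSS} then follows by evaluating $f$ at the right endpoint: we computed that $f(l) = m$ has roots exactly at $l = \frac{m}{2}$ and $l = m$, so for every $l$ with $1 \le l < \frac{m}{2}$ we get $f(l) < f(\frac{m}{2}) = m$, whence $F(m,l) < m$. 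Combined with Corollary~\ref{cor:tc:lookup:complexity}, which gives $\mathcal{O}(F(m,l)) = \mathcal{O}(l \times \log_2\frac{m}{l})$ in this regime, this shows \emph{TupleChain}'s lookup complexity is strictly below the $\mathcal{O}(m)$ of \emph{TSS}.

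For the second half of the statement — "the fewer the number of chains, the lower the lookup complexity" — I would again invoke monotonicity of $f$ on $[1, \frac{m}{2}]$: if $l_1 < l_2$ are both below $\frac{m}{2}$, then $f(l_1) < f(l_2)$, so the worst-case bound on $F$ decreases as $l$ decreases. I should be slightly careful to phrase this as a statement about the upper bound $f(l)$ (and the asymptotic order $l\log_2\frac{m}{l}$, which is likewise increasing in $l$ on this range since its derivative is $\log_2\frac{m}{l} - \log_2 e$, positive once $\frac{m}{l} > e$), rather than about the exact value $F(m,l)$, which depends on how the chains are actually formed. The cleanest presentation simply chains together: $F(m,l) \le f(l)$, $f$ increasing on $[1,\frac{m}{2})$, $f(\frac{m}{2}) = m$.

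The only real subtlety — and the step I would be most careful about — is the logical direction of "the fewer the number of chains, the lower the complexity." This is a statement about the guaranteed bound, not a claim that any particular smaller $l$ is achievable for a given tuple graph: the achievable values of $l$ are constrained by the path-cover structure of the tuple graph, and $F(m,l)$ for a fixed $l$ also depends on the $m_i$'s. So I would state the corollary's conclusion as a monotonicity property of the worst-case bound $f(l) = l(1+\log_2\frac{m}{l})$ over the feasible range $l \in [1,\frac{m}{2})$, which is exactly what the derivative computation preceding Corollary~\ref{cor:tc:lookup:complexity} delivers. No new computation is needed; the proof is essentially a two-line appeal to the already-established shape of $f$ plus the root values $l = \frac{m}{2}$ and $l = m$ of $f(l) = m$.
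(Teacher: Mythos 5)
Your proposal is correct and follows essentially the same route as the paper: the paper's (implicit) justification for Corollary~\ref{cor:tc:lookup:best} is precisely the preceding analysis of $f(l)=l\times(1+\log_2\frac{m}{l})$ — namely that $f$ is monotonically increasing on $[1,\frac{m}{2}]$ because $\frac{m}{2}<\frac{2m}{e}$, together with the root $f(\frac{m}{2})=m$, so $F(m,l)\le f(l)<m$ for $l<\frac{m}{2}$ and the bound decreases with $l$. Your added caveat that the monotonicity claim concerns the worst-case bound $f(l)$ rather than the exact cost $F(m,l)$ is a sound refinement the paper leaves implicit.
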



\subsection{Total Space Complexity}
\label{sect:analysis:memory}
To evaluate the space complexity of \emph{TupleChain}, we begin
  with the analysis on a single chain with $n_{c}$ rules falling into
  $m_{c}$ tuples. Its space complexity can be evaluated via the summation of
the number of entries inside all tuples on the chain for two reasons. First, the entries of a tuple are stored and managed by a hash table, and the storage for all tuples' hash tables on a chain dominates the space cost. Second, the space consumed by each entry is also related to the number of entries. Every entry of a tuple is designed to have the
same length, and is associated with a linked list that stores pointers to its owners.
Since one entry owns one marker at most, any entry
could be counted as an owner at most once. Accordingly, the total length
of all owner lists at most equals to the total number of
entries. Therefore, we only need to focus on the number of entries created.

First of all, every rule takes up an entry, and the process
of leaving markers will create additional entries. Leaving the marker
for a rule is recursively performed tuple by tuple, which may produce
at most $m_{c}-1$ entries. Therefore, there may be at most $n_{c} +
n_{c} \times (m_{c}-1) = n_{c} \times m_{c}$ entries. So the
  space complexity is \lcom{$n_{c} \times m_{c}$}. In \emph{TupleChain},
  every chain is independent, so for all chains in total, we have:

\begin{theorem}\label{the:tc:memory}
TupleChain's space complexity is \lcom{$n' \times m$}.
\end{theorem}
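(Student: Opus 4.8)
The plan is to aggregate the per-chain bound established just above — namely that a chain holding $n_c$ rules in $m_c$ tuples contributes at most $n_c \times m_c$ entries — across all $l$ chains, and then bound the sum by $n' \times m$. First I would let the $i$-th chain hold $n_i$ rules in $m_i$ tuples, so that by the single-chain analysis its entry count is $\mathcal{O}(n_i \times m_i)$, and the total number of entries across the whole structure is at most $\sum_{i=1}^{l} n_i \times m_i$ up to a constant. Since the chains partition the rule set and the tuple set, we have $\sum_i n_i = n$ and $\sum_i m_i = m$.

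The key step is the inequality $\sum_{i=1}^{l} n_i \times m_i \le n' \times m$. This follows by recalling that $n' = \max_i n_i$ is the size of the ``biggest'' chain: replacing each $n_i$ by the maximum $n'$ can only increase the sum, giving $\sum_i n_i m_i \le n' \sum_i m_i = n' \times m$. So the total number of entries is $\mathcal{O}(n' \times m)$. I would then invoke the two observations already made in the single-chain discussion to convert an entry count into a space bound: every entry has fixed length, and the owner lists across the whole structure have total length at most the total number of entries (since each entry is counted as an owner at most once, because it has at most one marker). Hence the overall space complexity is $\mathcal{O}(n' \times m)$, as claimed.

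There is essentially no hard part here — the theorem is a short corollary of Theorem~\ref{the:tc:memory}'s per-chain predecessor plus a one-line maximization. The only point needing a little care is making explicit that the decomposition into chains is a genuine partition of both rules and tuples (a disjoint path cover, as stated in Section~\ref{sect:tc:chains}), so that $\sum_i n_i = n$ and $\sum_i m_i = m$ hold exactly; and that the owner-list storage does not change the asymptotic order. I would keep the written proof to three or four lines.
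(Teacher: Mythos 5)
Your proposal is correct and takes essentially the same route as the paper: the paper establishes the per-chain bound of $n_c \times m_c$ entries (hash-table entries dominate, owner lists total at most one pointer per entry) and then simply asserts that summing over the independent chains yields $\mathcal{O}(n' \times m)$. Your explicit aggregation step $\sum_{i=1}^{l} n_i m_i \le n' \sum_{i=1}^{l} m_i = n' \times m$ is exactly the maximization the paper leaves implicit, so nothing is missing.
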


\subsection{Time Complexity of Rule Update}
\label{sect:analysis:update}
Tuple insertion/deletion are actually
rarely triggered\footnote{the rates of tuple insertion/deletion we observed
  throughout our experiments were as low as $0.1\%$.} and can be
performed efficiently. Therefore, we do not count them for
complexity analysis. We focus on the complexity of handling
markers and hints when inserting / deleting a rule within an
existing tuple, as the corresponding operations are the
most time-consuming.


Inserting / deleting a rule with a tuple only affects a
  single chain that hosts this tuple. We denote the number of tuples
  on this chain and the number of rules belonging to the tuples on the chain as
  $m_{c}$ and $n_{c}$ respectively. Since any entry has one marker at most and leaving
   the marker for a entry is recursively performed tuple by tuple, at most
  $m_{c} - 1$ entries will be accessed or created. As obtaining or deleting
  a marker only requires one hash operation, the time
  complexity of marker maintenance turns to be \lcom{$m_{c}$}.

 Now we evaluate the time complexity for reporting hints. By
  associating every entry with a separate owner list, we can locate all
  owners of an entry quickly without any hash operation. In addition, the
  hint reporting starting from an entry in tuple $t_{i}$ is also performed
  tuple by tuple recursively, forming a \emph{reporting tree} with
  every level of entries residing in a tuple $t_{j} (j > i)$
  excluding the tree root which is in $t_{i}$. It is a
  tree rather than a path because one entry can have multiple
  owners. For an entry in $t_{i}$, its reporting tree excluding
    this entry can be as large as covering all entries in all tuples
  $t_{j} (j >  i)$. This determines that the worst case time complexity of hint reporting is
  \lcom{$n_{c} \times m_{c}$}.

In the average case, however, the cost of reporting hints
is perfectly amortized. Suppose the tuple $t_{i}$ ($i\in[1, m_{c}]$)
  has $x_{i}$ entries, we evaluate the cost of reporting hints for all
  entries in this tuple. Since one entry has one marker at most, any
  two reporting trees rooted at two different entries in $t_{i}$ would never
  intersect. Accordingly, the union of all reporting trees rooted at
  $t_{i}$ will have $\sum_{j=i+1}^{m_{c}}x_{j}$ entries at most, which
determines the cost of reporting hints for all entries in this
tuple. Thus, the total cost across $m_{c}$ tuples is summed up as:
\begin{align}
\sum_{i=1}^{m_{c}-1}\sum_{j=i+1}^{m_{c}}x_{j}
= \sum_{i=1}^{m_{c}}(i-1)\times x_{i}
< m_{c} \times \sum_{i=1}^{m_{c}}x_{i} \notag
\end{align}
Since this cost can be amortized by all $\sum_{i=1}^{m_{c}}x_{i}$
entries on this chain, the average-case time complexity for hint
reporting turns to be \lcom{$m_{c}$}.

Any update will be performed within one of the chains in
  \emph{TupleChain}, so we take $n'$ and $m'$ to calculate the overall
update complexity.
\begin{theorem}\label{the:tc:update}
TupleChain's update complexity is \lcom{$m'$} in the
average case, and \lcom{$n'\times m'$} in the worst case.
\end{theorem}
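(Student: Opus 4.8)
The plan is to decompose the update cost into two parts --- marker maintenance and hint reporting --- and handle them separately, then combine.

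First I would fix notation: any rule insertion or deletion touches exactly one tuple $t$, which lies on exactly one chain, say with $m_c$ tuples holding $n_c$ rules in total. Since $m_c \le m'$ and $n_c \le n'$ by the definitions of the ``longest'' and ``biggest'' chains, it suffices to bound the per-update cost on an arbitrary chain by \lcom{$m_c$} in the average case and \lcom{$n_c \times m_c$} in the worst case, and then substitute. For marker maintenance: each entry has at most one marker (a chain structure, so at most one preceding tuple), and \lag{leave:marker} recurses tuple by tuple toward the chain head, so at most $m_c - 1$ entries are created or visited; obtaining or deleting a marker (\lag{delete:rule}) is a single hash lookup per tuple. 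Hence marker maintenance is \lcom{$m_c$} in all cases.

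Next I would handle hint reporting, which is the dominant and more delicate term. A single call to \texttt{report\_hint} on an entry $e$ in tuple $t_i$ spawns a \emph{reporting tree}: each owner of $e$ lives in a tuple $t_j$ with $j > i$ (owners point ``forward'' along the chain, toward the tail), and recursively each of those reports to \emph{its} owners, etc. In the worst case this tree can span every entry in every tuple $t_j$ with $j > i$, giving \lcom{$n_c \times m_c$} --- this is simply the bound already derived in Section~\ref{sect:analysis:update}, so I would just cite it. For the average case I would invoke the amortization argument from the excerpt: fixing a tuple $t_i$ with $x_i$ entries, the reporting trees rooted at distinct entries of $t_i$ are pairwise disjoint (each downstream entry has a unique marker, hence lies in at most one such tree), so their union has at most $\sum_{j=i+1}^{m_c} x_j$ nodes; summing over $i$ and reindexing gives $\sum_{i=1}^{m_c} (i-1)\, x_i < m_c \sum_{i=1}^{m_c} x_i$, which amortized over the $\sum x_i$ entries on the chain yields \lcom{$m_c$} per entry. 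Adding the \lcom{$m_c$} marker cost leaves the average update cost at \lcom{$m_c$} and the worst case at \lcom{$n_c \times m_c$}. Substituting $m_c \le m'$, $n_c \le n'$ gives \lcom{$m'$} average and \lcom{$n' \times m'$} worst, completing the proof.

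The main obstacle is making the amortization rigorous rather than hand-wavy: one must argue carefully that the ``one marker per entry'' invariant really does force disjointness of reporting trees \emph{across a whole sequence of updates}, not just within a single update, since amortized complexity is a claim about an update sequence. The cleanest route is to frame it as a potential/accounting argument --- each entry, when created, is charged for the at-most-$m_c$ future hint-propagation steps that can pass through it --- or, more simply, to observe that the total work to (re)establish all hints on a chain from scratch is $O(m_c n_c)$ and this is spread over $\Omega(n_c)$ update operations. I would also take care with the edge cases ($e$ not a marker, the tuple becoming empty and triggering tuple deletion, which by the footnote is negligible and excluded), and note that the factor $d$ suppressed throughout reappears multiplicatively, matching the $\mathcal{O}(d \times m/l)$ entry in Table~\ref{tb:complexity}.
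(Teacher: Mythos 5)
Your proposal is correct and follows essentially the same route as the paper: the identical decomposition into marker maintenance (\lcom{$m_c$} via the one-marker-per-entry chain structure) and hint reporting, the same worst-case bound from a reporting tree spanning all downstream entries, and the same amortization via disjointness of reporting trees rooted at distinct entries of $t_i$, summing to $\sum_{i}(i-1)x_i < m_c\sum_i x_i$ before substituting $m_c \le m'$, $n_c \le n'$. Your closing caveat --- that the paper amortizes over the entries of a chain rather than over an actual sequence of update operations, so a potential-function formulation would be needed for full rigor --- is a fair observation, but it is a refinement of the same argument rather than a different one.
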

\section{Boosting Practical Performance}
\label{sect:optimization}
The comprehensive theoretical analysis in the last section provides us
with more insight into \emph{TupleChain}, which enables us to refine
the design to boost its practical performance.

\subsection{Optimal Chain Construction}
\label{sect:construct-adaptation}
According to \lcor{tc:lookup:best}, to construct a
  \emph{TupleChain} with the lowest lookup complexity, we should break
  the tuple graph into a minimal number of chains. This is
essentially a classic problem in graph theory known as \emph{minimum
  path cover}, which is NP-hard~\cite{mpc}. However, for a directed
acyclic graph (DAG) like the tuple graph, it can be solved as a
matching problem. We adopt the Hungarian
algorithm~\cite{hungarian} to solve this problem, and construct an
optimal \emph{TupleChain} accordingly.

\subsection{Greedy Strategies for Tuple Insertion}
\label{sect:tuple-insertion}
Once a new tuple is created, we first try to insert it into an
existing chain whenever feasible to control the number of chains,
which is the key factor to restrict the lookup cost
  (\lcor{tc:lookup:best}). In case that multiple chains can host
this tuple, we chose the shortest one to control the length
of the longest chain, as it determines the memory cost and
update overhead (\lthe{tc:memory}). Further, if there are multiple chains that
  can host this tuple, we choose the one with fewer rules to reduce
the worst-case update overhead (\lthe{tc:update}).
\subsection{An Extension by Rule Grouping}
\label{sect:etc}
Many researchers have observed that the number of tuples grows significantly when the number of rules become larger \cite{TM, TupleTree, MT}. This will cause too many and too long chains in our scheme, resulting in the decrease of performance. In view of this, we propose a new data structure, {\emph{Extended TupleChain} (ETC in short), to boost the overall performance during practical running.

\begin{figure}[tbp]
  \centering
    \includegraphics[width=\linewidth]{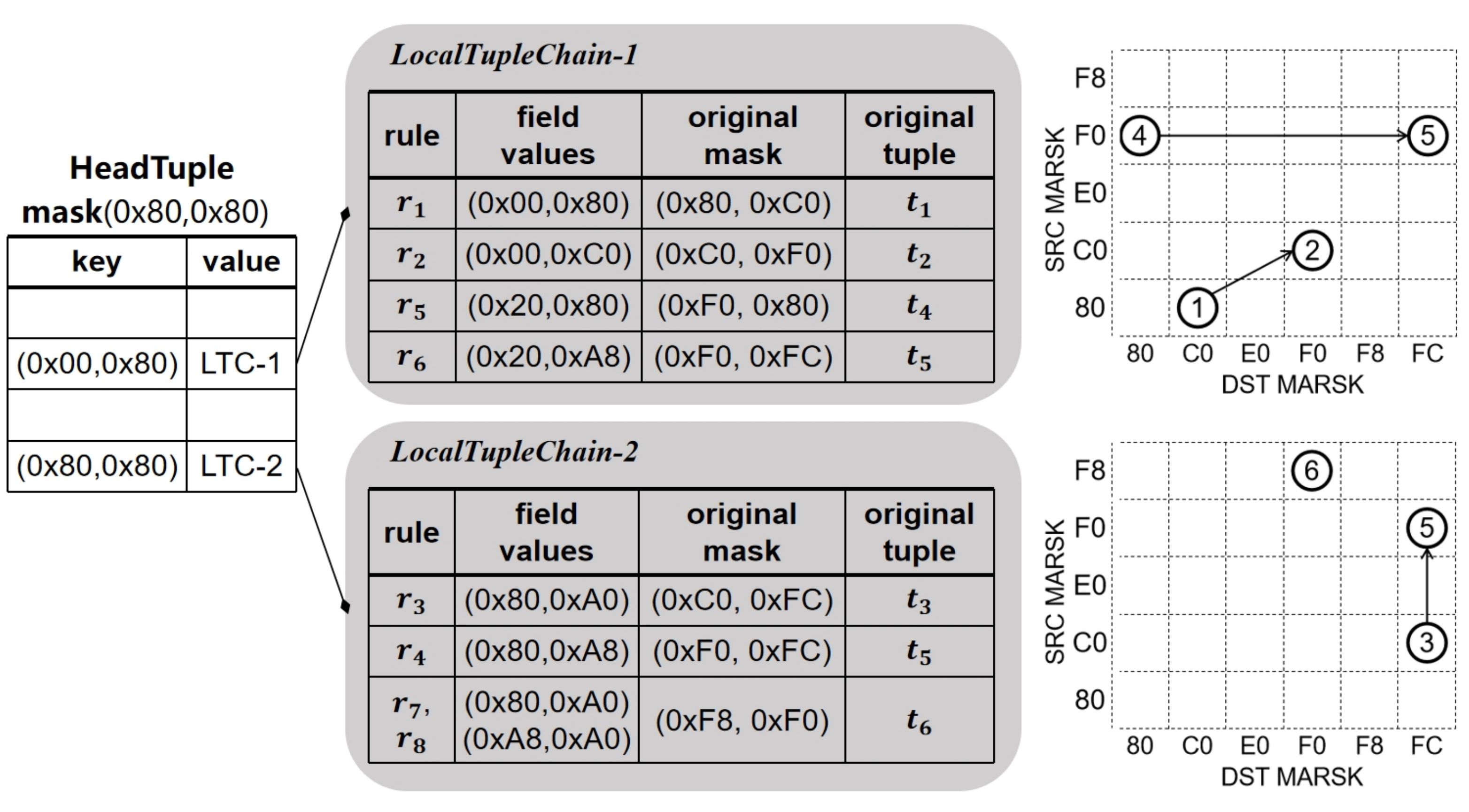}
    \caption{An Extened Tuple Chain with one head tuple.}
    \label{fig:etc}
\end{figure}

Our key idea is to reduce the access of tuples by merging chains into groups. Nearby chains in the tuple graph will be merged into one group. 
For each group, we set up a mask by taking the intersection of all masks from the chains to merge. We create a head tuple with this mask and insert all rules of these chains  into this head tuple. With keys formed by applying the mask of the head tuple to the fields of rules,  several rules may fall into the same entry. In Fig.~\ref{fig:etc},  masked by $\vec m_h$ (0x80,0x80) of head tuple to form the key  (0x00, 0x80), $r_1$ with (0x00, 0x80), $r_2$ with (0x00, 0xC0),  $r_5$ with (0x20, 0x80) and $r_6$ with (0x20, 0xA8) are inserted into the same entry.  These rules will be further constructed into a local TupleChain, similar to that of the \emph{TupleChain.}

Compared to the original TupleChain, local TupleChains have much shorter lengths.
This can boost the performance in all aspects. First of all, any rule update is processed within one ``small'' instance, so the update overhead can be sharply reduced. Although all head tuples must be probed for a lookup, only one ``small'' instance managed by every head tuple that yields a match will be checked intensively. Therefore, the lookup cost can also be significantly reduced, especially when there are only a few head tuples (which is the fact in practice according to our evaluations). At last, the total memory footprint of ETC can be reduced as well. The maintenance of additional markers that contains no rule dominates the storage of TupleChain, which can be greatly reduced with shorter chains in ``smaller'' \emph{TupleChain}.

\section{Performance Evaluations}
\label{sect:exp}
In this section, we evaluate the performance and scalability
  of TupleChain (\emph{TC}) and its extension \emph{ETC} with extensive
  experiments, and compare them with the state-of-the-art flow lookup
  schemes as well as classical packet classification algorithms. We
  implement all algorithms on our own, except \emph{TM}~\cite{TM}, \emph{MT}\cite{MT}
  \emph{TupleTree}~\cite{TupleTree}, and
    \emph{CutTSS}~\cite{CutTSS}, whose codes are downloaded from public
    repositories of
    GitHub\footnote{https://github.com/drjdaly/tuplemerge;
          https://gitee.com/dave\_ta/TupleTree;\newline https://github.com/zcy-ict/MultilayerTuple;http://www.wenjunli.com/CutTSS;} and provided by the authors
    respectively. Our evaluation platform consists of a \emph{lookup module} running with
  a thread for flow lookup (or packet classification) and rule
  updates, a \emph{tester} and an \emph{update manager} that run in
  different threads to feed the lookup module with packets and update
  requests respectively, via shared buffers at pre-defined yet configurable rates.


\label{sect:exp:tuple}
\begin{table}[tbp]
\caption{Number of tuples accessed in average}
\begin{center}
\setlength{\tabcolsep}{0.4em}
\renewcommand{\arraystretch}{1}
\begin{tabular}{|c|c|c|c|c|}
\hline
  & \multicolumn{2}{|c|}{1 kilo rules} & \multicolumn{2}{|c|}{1 million rules}\\ \cline{2-5}
  & number of tuples & other$^{\mathrm{a}}$ & number of tuples & other$^{\mathrm{a}}$ \\
\hline
\textit{TSS}               & 80      &        & 404     &        \\ \hline
\textit{PTS}               & 1        & 5.8  & 1.9      & 5.3   \\ \hline
\textit{TSBH}             & 28.2   &        & 83.4    &         \\ \hline
\textit{PSTS}             & 55.8    &        & 334     &         \\ \hline
\textit{TM}                & 1        & 1     & 8         & 45.6  \\ \hline
\textit{MT}               & 1         &       1&  5.6&           8.3\\ \hline
\textit{TupleTree}         & 1         &       1&  5.2&           6.5\\ \hline
\textit{TC}                & 12.1    &        &  23.1   &          \\ \hline
\textit{ETC}               & 1         &       &  4.2     &           \\ \hline
\multicolumn{5}{l}{$^{\mathrm{a}}$for \emph{PTS}, it's the number of accessed trie nodes;}\\
\multicolumn{5}{l}{~for \emph{TM}, \emph{MT}, \emph{TupleTree}, it's the number of verifications;}\\
\end{tabular}
\label{tb:tuples}
\end{center}
\end{table}

\subsection{Reduction of the Number of Tuples to Search}
All \emph{TSS}-based schemes attempt to reduce the
number of tuples to search. We compare 9 schemes, \emph{TSS},
  \emph{PTS}, \emph{TSBH}, \emph{PSTS}, \emph{TM}, \emph{MT}, \emph{TupleTree}, \emph{TC}, and
  \emph{ETC} using two datasets with 1 kilo and 1 million 2-field rules
  and corresponding traffic traces respectively. The average number of
  tuples accessed for one lookup is reported in
  Table~\ref{tb:tuples}. \emph{TSS} produced $80$ and $404$ tuples
respectively, which are all searched in a lookup. Our results
confirm the statement claimed in~\cite{tss} that \emph{PTS} has a
  promising practical performance. In this study, it requires fewer
than $2$ tuple searches on average. However, before tuple search, it
needs to process each field with prefix trees, and combine the
results via bitmap operations. On the other hand, \emph{TM} reduces
the number of tuples of two datasets from $80$ to $1$ and from $404$
to $8$ at the cost of additional verifications. \emph{MT} and \emph{TupleTree} also reduce the number of tuples
and their additional verifications are smalller than \emph{TM} in 1 million rules case.

%
Compared with \emph{TSS} on these two datasets, our basic
\emph{TupleChain} scheme reduces the number of tuples to search by
$\%84.9$ and $\%94.3$ respectively, which can be further improved by
its extension \emph{ETC}. \emph{ETC} requires fewer
tuple searches than all other approaches except
  \emph{PTS}. Although \emph{PTS} requires slightly fewer tuple searches than
  \emph{ETC}, it brings in additional cost on tree traversals. It
    is clear that \emph{ETC} is a better choice for practical
    implementation compared with \emph{TC}, but \emph{TC} guarantees
    the worst-case performance of \emph{ETC}.

\subsection{Performance with Regular Rules via ClassBench}
\label{sect:exp:classbench}
We compare the performance of \emph{ETC} with
three state-of-the-art schemes for fast packet classification,
  \emph{MT}, \emph{TupleTree} and \emph{CutTSS}, because of their outstanding performance. We conduct performance
  evaluations using the rules and traffics generated by
  ClassBench~\cite{ClassBench}. There are 1000 fw (firewall) rule sets and
   1000 acl (access control list) rule sets with different configuration, each consisting of
    100 kilo rules and a corresponding traffic trace, are used
  for evaluation. We measure the lookup speed of each approach in \emph{million
    packets per second (MPPS)}  and draw the
  complementary cumulative distribution function 
  accordingly. Figures~\ref{exp:ccdf:fw} and \ref{exp:ccdf:acl} show
  the results with fw and acl rules respectively. It's clear that
  \emph{ETC} outperforms other two in most cases, more significant  with fw rules. As for
acl rules, the four curves are close. The performance of \emph{MT} and \emph{TupleTree} is improved, because the distribution of acl rules is concentrated and beneficial for merging algorithms.

\begin{figure}[tbp]
  \centering
  \begin{minipage}[b]{0.48\linewidth} 
    \centering
    \includegraphics[width=\linewidth]{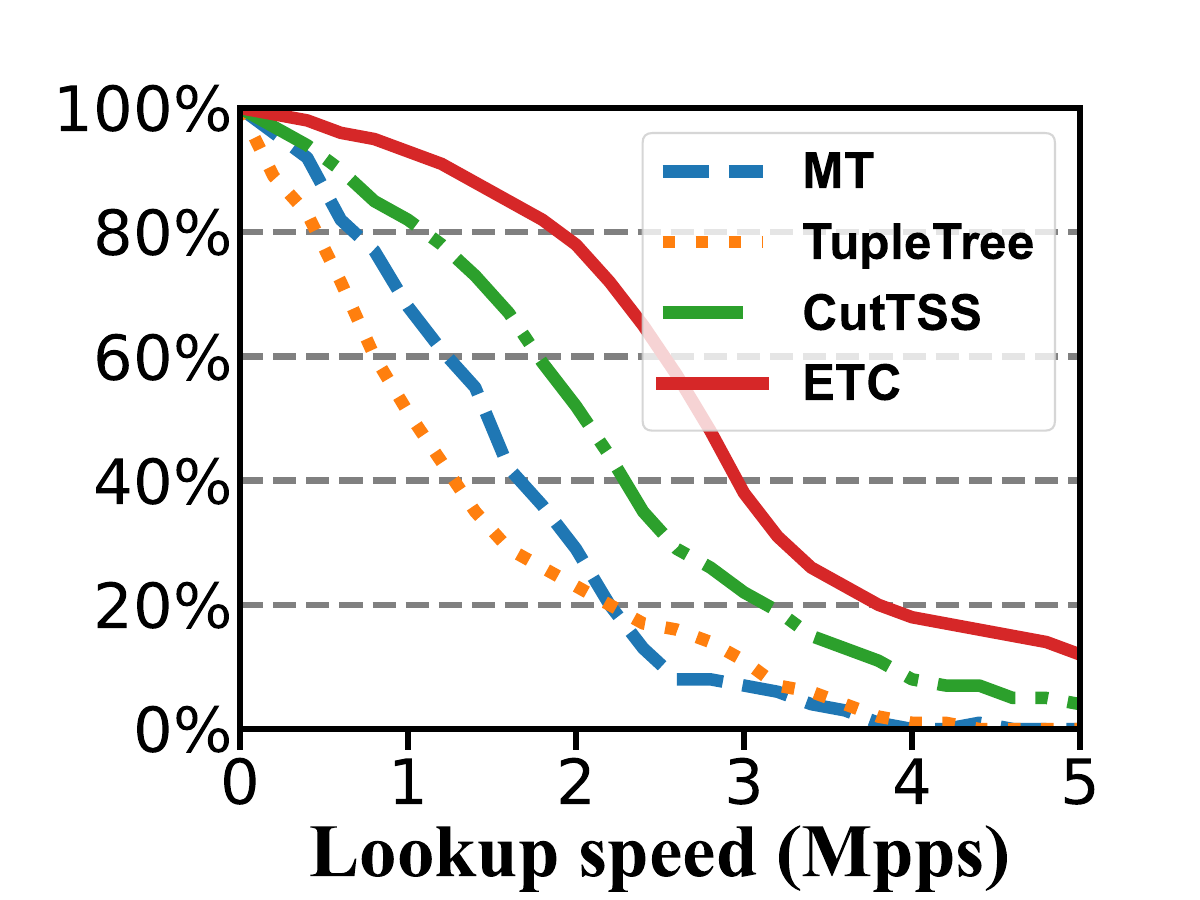}
    \caption{CCDF of lookup speed with 10k fw rules and traces.}
    \label{exp:ccdf:fw}
  \end{minipage}%
  \quad
  \begin{minipage}[b]{0.48\linewidth}
    \centering
    \includegraphics[width=\linewidth]{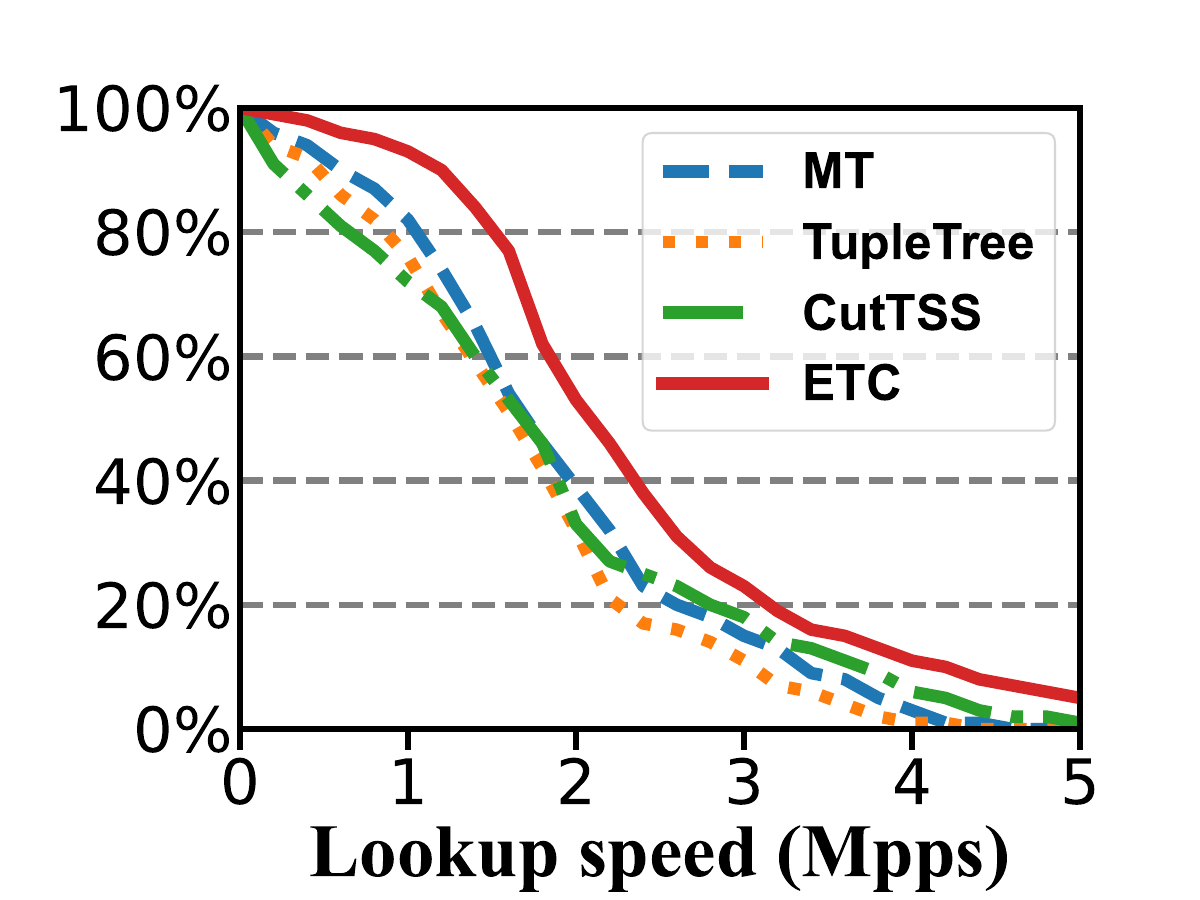}
    \caption{CCDF of lookup speed with 10k acl rules and traces.}
    \label{exp:ccdf:acl}
  \end{minipage}
\end{figure}

\subsection{Scalability of Algorithms}
\label{sect:exp:scalbility}
Finally, we demonstrate the scalability of \emph{ETC} in four
  challenging scenarios, and compare it with \emph{MT}, \emph{TupleTree} and 
  \emph{CutTSS} that have claimed their scalability in at least one of
these scenarios.

\subsubsection{Scalability to the Number of Fields}
\label{sect:exp:field}
We evaluate 4 schemes with 50 datasets sized around
$100~K$, where the number of fields ranges from $2$ to $100$. With the
tester flushing packets at $10$ MPPS, we measure the system
throughput and memory cost for each scheme. As shown in Fig.~\ref{fig:exp:field:throughput} and
Fig.~\ref{fig:exp:field:memory}, only \emph{ETC} works in all
cases. The others experience a sharp decline
in throughput. The throughput of each drops below $0.01$
MPPS once the number of fields exceeds $20$. In contrast, \emph{ETC} shows excellent
scalability, with its throughput only decreasing from
$6$ MPPS to $1.1$ MPPS. For the memory cost, others require
more than $1~GB$ of memory, and can not be constructed when there are
more than $50$ fields (the system runs out of memory). In contrast, \emph{ETC}
requires less than $70~MB$ of memory to handle $100~K$ $100$-field
rules.

\begin{figure}[tbp]
  \centering
  \begin{minipage}[b]{0.5\linewidth} 
    \centering
    \includegraphics[width=\linewidth]{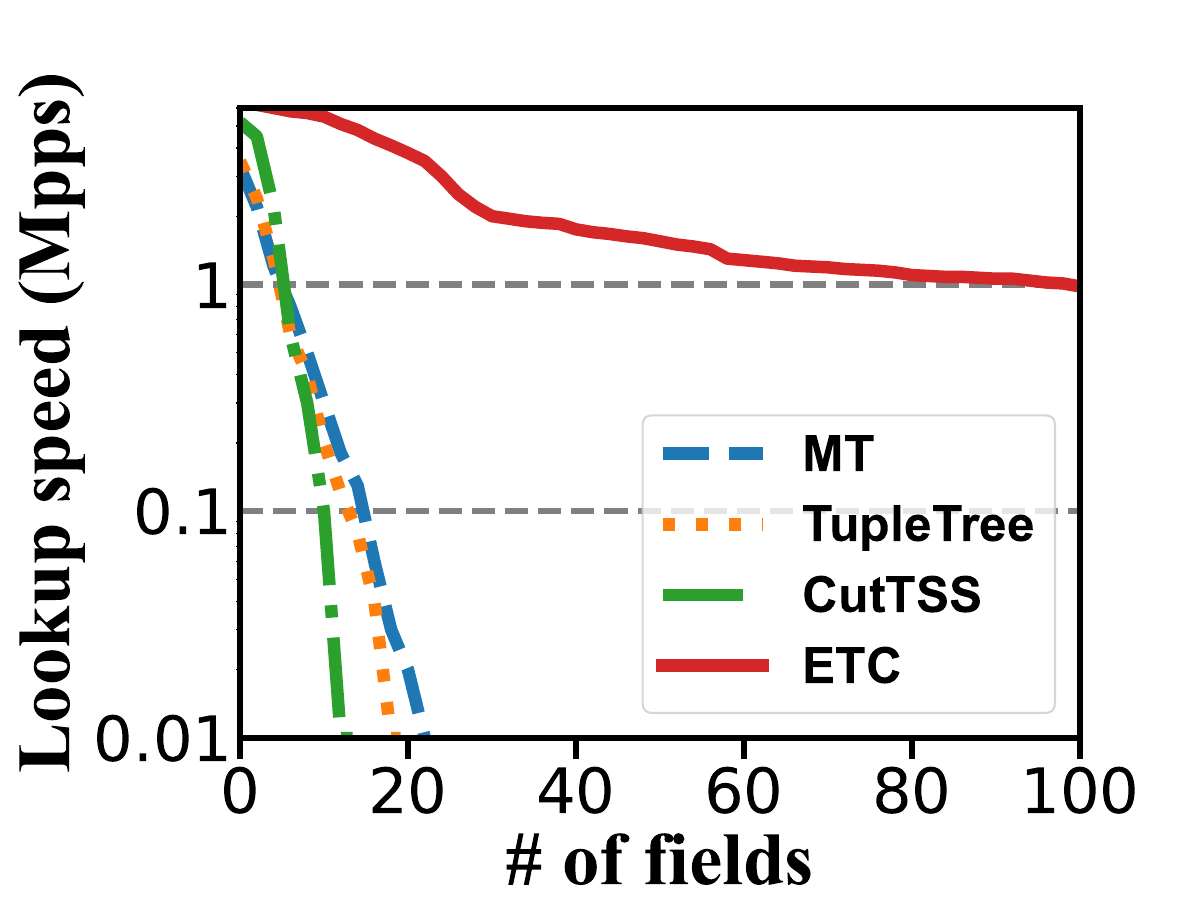}
    \caption{performance versus \# of fields}
    \label{fig:exp:field:throughput}
  \end{minipage}%
  \begin{minipage}[b]{0.5\linewidth}
    \centering
    \includegraphics[width=\linewidth]{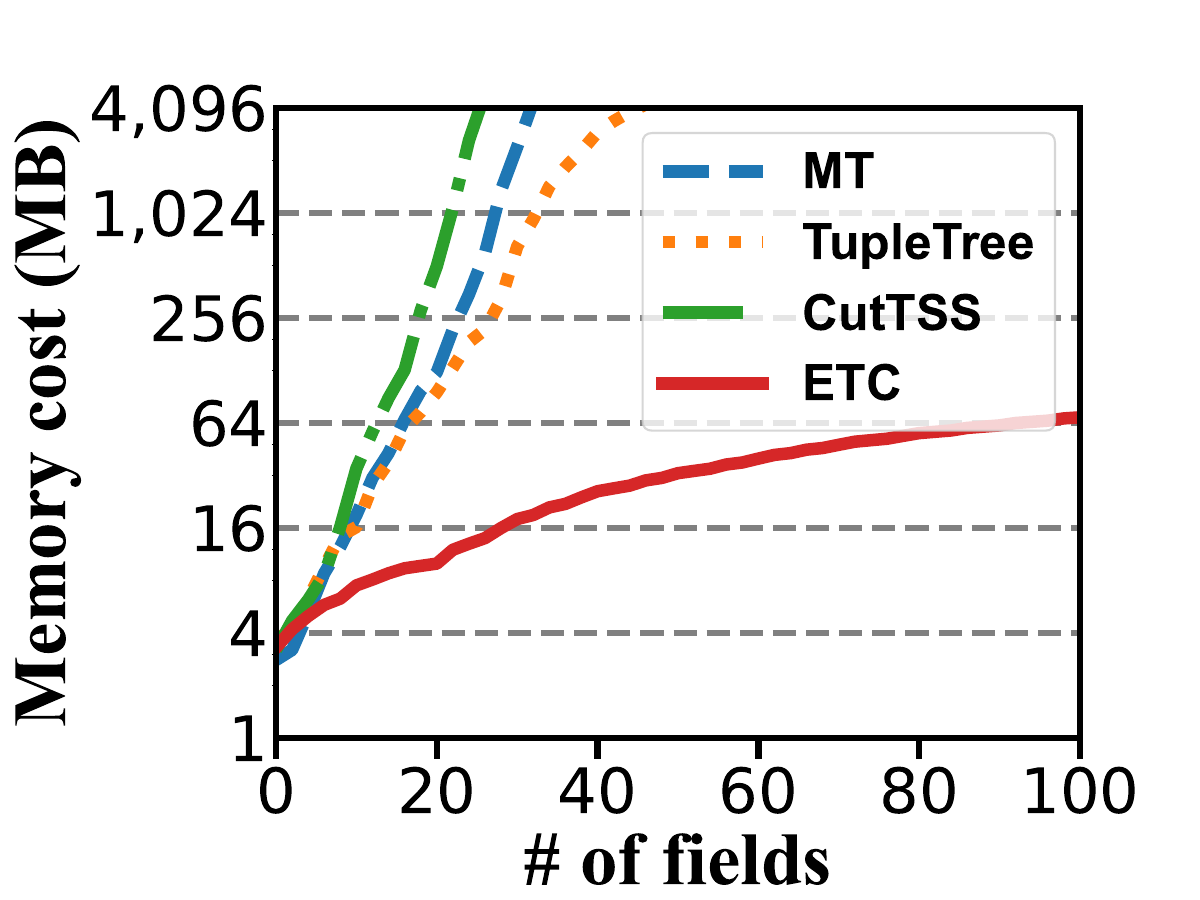}
    \caption{memory versus \# of fields}
    \label{fig:exp:field:memory}
  \end{minipage}
\end{figure}

\subsubsection{Scalability to the Size of Dataset}
\label{sect:exp:rule}
We evaluate each of the 4 schemes with 6 2-field rule sets of different
sizes to measure its system throughput and memory cost.
\emph{ETC} achieves the highest performance in all cases (as shown in
Fig.~\ref{fig:exp:scale:throughput}), while \emph{MT} has the lowest memory cost
(Fig.~\ref{fig:exp:scale:memory}). Overall, in comparison to
\emph{MT}, \emph{ETC} achieves a speedup of $1.25 \sim 3.7$ at the cost
of only $\%10 \sim \%30$ additional memory consumption. Additionally,
\emph{ETC}'s throughput decreases the slowest as the scale increases. Only
\emph{ETC} can offer a throughput higher than $1$ MPPS to process
a data set with $10$ million rules. We can see that the performance decreases and the memory cost increases sharply for \emph{CutTSS}, which is due to the copy of rules in the decision tree algorithm.

\begin{figure}[tbp]
  \centering
  \begin{minipage}[b]{0.5\linewidth} 
    \centering
    \includegraphics[width=\linewidth]{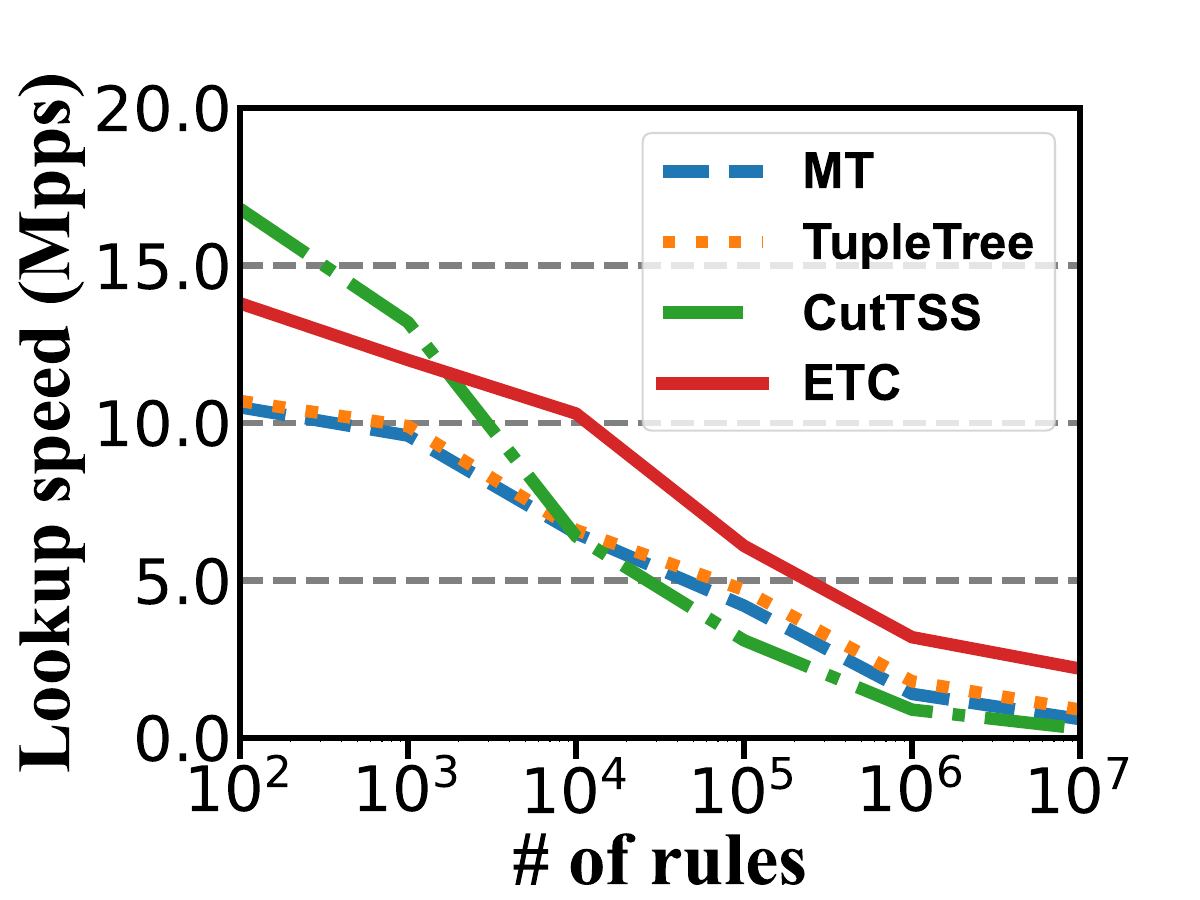}
    \caption{performance versus scale}
    \label{fig:exp:scale:throughput}
  \end{minipage}%
  \begin{minipage}[b]{0.5\linewidth}
    \centering
    \includegraphics[width=\linewidth]{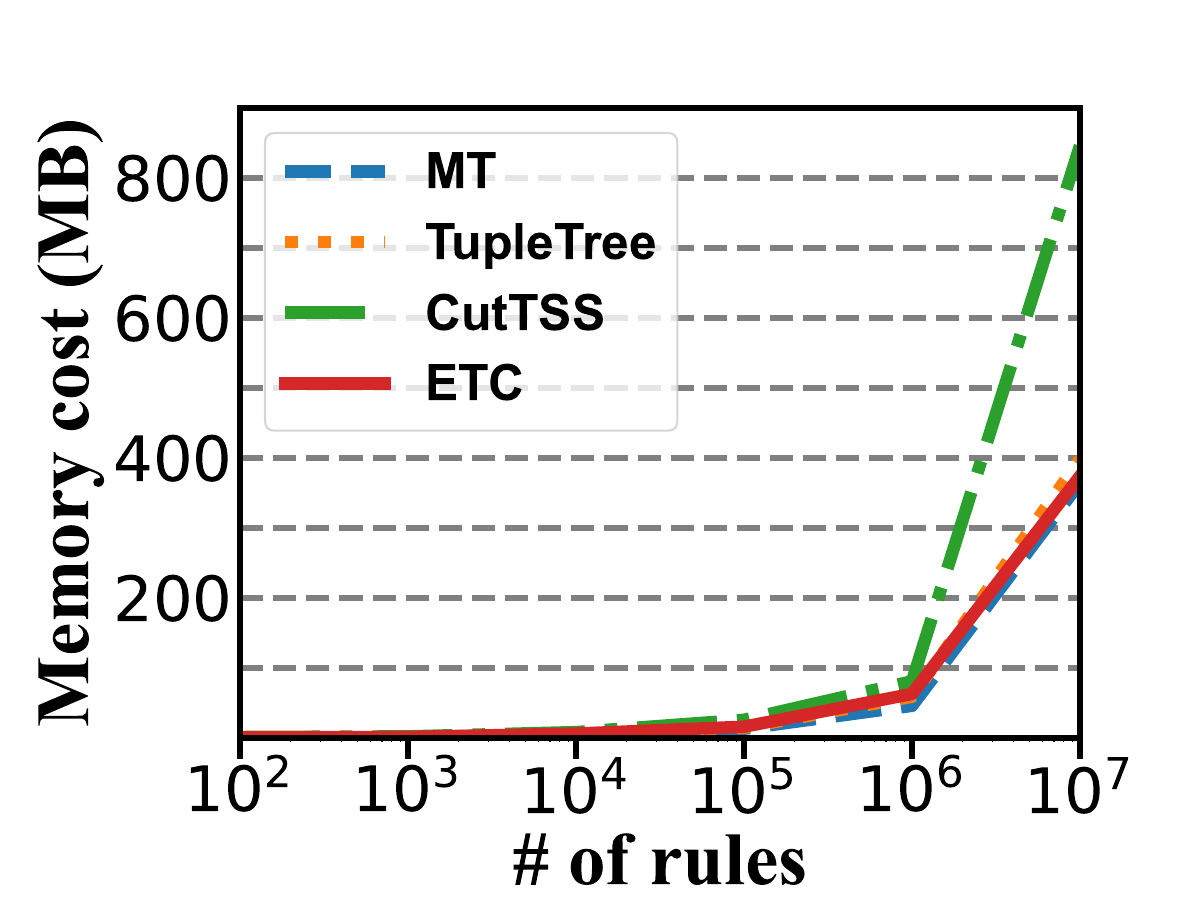}
    \caption{memory cost versus scale}
    \label{fig:exp:scale:memory}
  \end{minipage}
\end{figure}

\subsubsection{Scalability to the Rate of Receiving Packets}
\label{sect:exp:throughput}
We compare the system throughput of 4 schemes with a
  dataset of 10 million 2-filed rules and with the tester flushing packets at increasing
  rates. Figure~\ref{fig:exp:throughput} shows the similar trend for
  each of them. As the transmission rate of the tester increases, its
receiving rate increases linearly at the beginning and then reaches the peak
around a particular rate. \emph{ETC} can accommodate a maximum
throughput of around $2$ MPPS, with a speed up of $1.2 \sim 2$ compared to
\emph{MT}, \emph{TupleTree} and \emph{CutTSS}.

\subsubsection{Scalability to the Rate of Updates}
\label{sect:exp:update}
We evaluate the 4 schemes with a dataset of 10 million
  2-field rules as well as 1 million insertion/deletion requests. The
  tester flushes packets at a fixed rate of 2 MPPS. As the update rates
increases from $100$ to $10$ million per second, we measure
the receiving rate at the tester. In Fig.~\ref{fig:exp:update}, all
schemes experience a decrease around $20\%$ in
system throughput, but \emph{ETC} remains the fastest all the
  time, staying as fast as $1.4$ MPPS.

\begin{figure}[tbp]
  \centering
    \begin{minipage}[b]{0.5\linewidth}
    \centering
    \includegraphics[width=\linewidth]{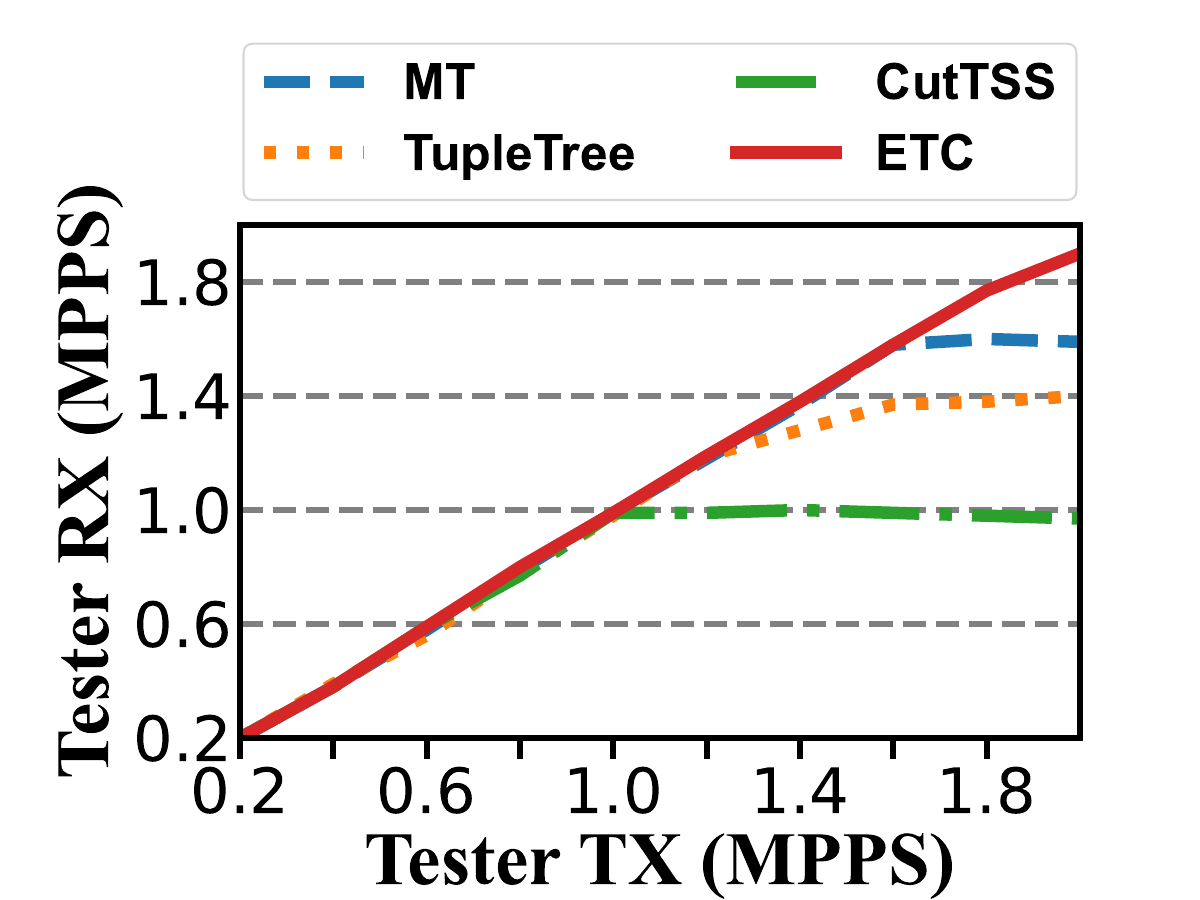}
    \caption{lookup performance with increasing link rates.}
    \label{fig:exp:throughput}
  \end{minipage}%
  \begin{minipage}[b]{0.5\linewidth} 
    \centering
    \includegraphics[width=\linewidth]{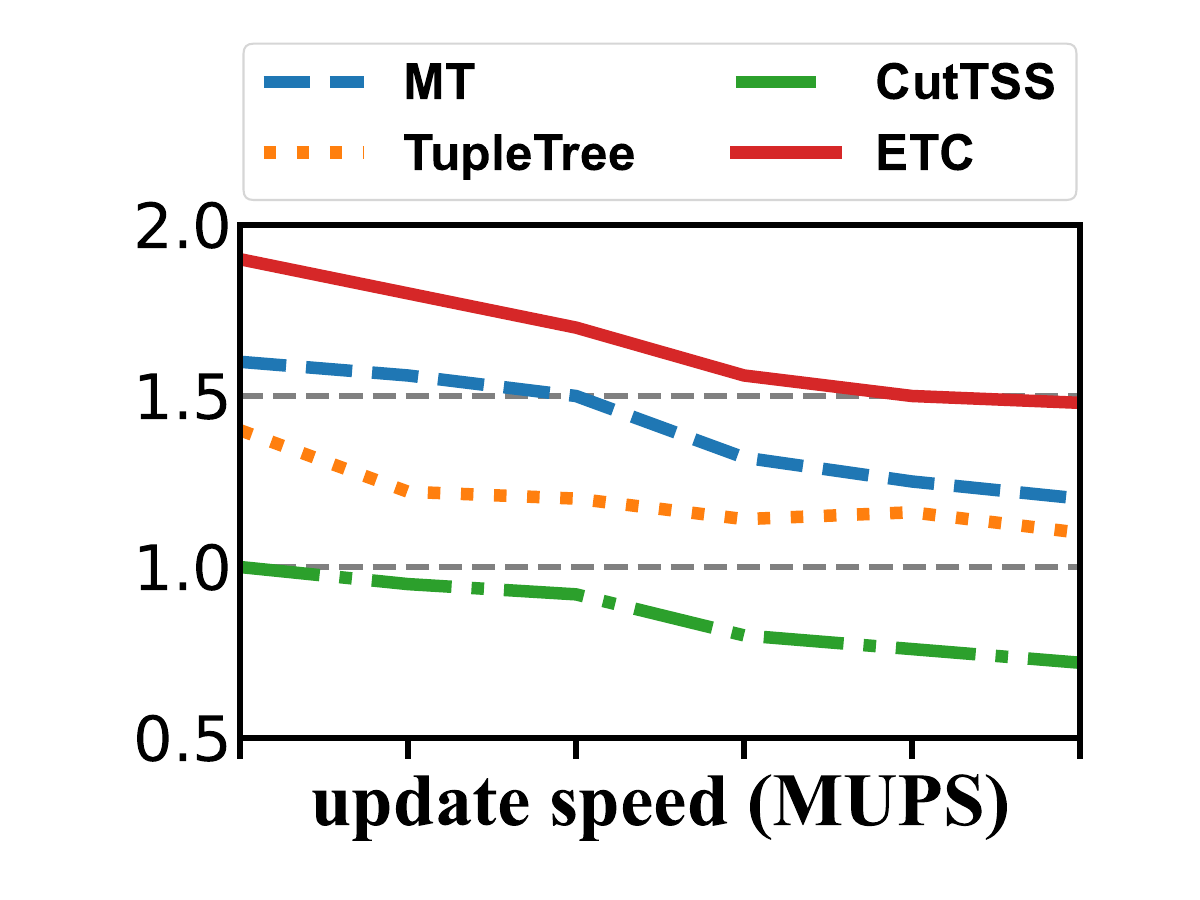}
    \caption{lookup performance with frequent updates.}
    \label{fig:exp:update}
  \end{minipage}%
\end{figure}
\section{Conclusions}
\label{sect:end}

In this paper, we propose a novel scheme for OpenFlow table lookup, with both fast lookup and efficient updates, as well as multifaceted scalability. The key idea under this approach is to explore connections among rule groups (i.e., tuples) to guide more efficient lookup. It is proved to have a near-logarithmic worst-case computing complexity for flow lookup, and a desirable average-case computing complexity for rule updates. Its promising actual performance and scalability are clearly demonstrated via extensive experiments. This work confirms that TSS model is a good starting point for building up a scalable flow lookup scheme. Besides, our experience suggests that a desirable computing complexity might be helpful to achieve good scalability, and that the actual performance can be improved greatly by making better use of the characteristics of rule distribution.

\bibliographystyle{plain}
\bibliography{reference}

\begin{thebibliography}{10}

\bibitem{scale-bgp}
Bgp routing table analysis reports.
\newblock \url{http://bgp.potaroo.net/bgprpts/bgp-active.png}.
\newblock Accessed: 2018-07-21.

\bibitem{scale-cloud}
Break scalability barriers in openflow sdn.
\newblock \url{https://www.infoworld.com/article/3061152/networking/break-scalability-barriers-in-openflow-sdn.html}.
\newblock Accessed: 2018-07-21.

\bibitem{of-specification}
Openflow switch specification 1.5.1.
\newblock \url{https://www.opennetworking.org/wp-content/uploads/2014/10/openflow-switch-v1.5.1.pdf}.

\bibitem{mpc}
Path cover.
\newblock \url{https://en.wikipedia.org/wiki/Path_cover}.

\bibitem{ovs}
J.~Pettit B.~Pfaff, E.~Jackson T.~Koponen, J.~Rajahalme A.~Zhou, A.~Wang J.~Gross, P.~Shelar J.~Stringer, et~al.
\newblock The design and implementation of open vswitch.
\newblock In {\em NSDI}, pages 117--130, 2015.

\bibitem{class:cut:effi}
G.~Voskuilen B.~Vamanan and T.N. Vijaykumar.
\newblock Efficuts: Optimizing packet classification for memory and throughput.
\newblock In {\em Proceedings of the ACM SIGCOMM 2010 Conference}, SIGCOMM '10, pages 207--218. ACM, 2010.

\bibitem{BytesCut}
J.~Daly and E.~Torng.
\newblock Bytecuts: Fast packet classification by interior bit extraction.
\newblock In {\em INFOCOM, 2018 Proceedings IEEE}. IEEE.

\bibitem{TM}
James Daly, Valerio Bruschi, Leonardo Linguaglossa, Salvatore Pontarelli, Dario Rossi, Jerome Tollet, Eric Torng, and Andrew Yourtchenko.
\newblock Tuplemerge: Fast software packet processing for online packet classification.
\newblock {\em IEEE/ACM transactions on networking}, 27(4):1417--1431, 2019.

\bibitem{PS}
S.~Yingchareonthawornchai et.al.
\newblock A sorted partitioning approach to high-speed and fast-update openflow classification.
\newblock In {\em Network Protocols (ICNP), 2016 IEEE 24th International Conference on}, pages 1--10. IEEE, 2016.

\bibitem{class:pro:rfc}
P.~Gupta and N.~McKeown.
\newblock Packet classification on multiple fields.
\newblock In {\em ACM SIGCOMM' 99}, pages 147--160. ACM, 1999.

\bibitem{he2014meta}
Peng He, Gaogang Xie, Kav{\'e} Salamatian, and Laurent Mathy.
\newblock Meta-algorithms for software-based packet classification.
\newblock In {\em Network Protocols (ICNP), 2014 IEEE 22nd International Conference on}, pages 308--319. IEEE, 2014.

\bibitem{sdn}
Keith Kirkpatrick.
\newblock Software-defined networking.
\newblock {\em Communications of the ACM}, 56(9):16--19, 2013.

\bibitem{hungarian}
Harold~W Kuhn.
\newblock The hungarian method for the assignment problem.
\newblock {\em Naval Research Logistics (NRL)}, 2(1-2):83--97, 1955.

\bibitem{kuzniar2015you}
Maciej Ku{\'z}niar, Peter Pere{\v{s}}{\'\i}ni, and Dejan Kosti{\'c}.
\newblock What you need to know about sdn flow tables.
\newblock In {\em Passive and Active Measurement}, pages 347--359. Springer, 2015.

\bibitem{BV}
TV~Lakshman and Dimitrios Stiliadis.
\newblock High-speed policy-based packet forwarding using efficient multi-dimensional range matching.
\newblock In {\em ACM SIGCOMM Computer Communication Review}, volume~28, pages 203--214. ACM, 1998.

\bibitem{CutTSS}
Wenjun Li, Tong Yang, Ori Rottenstreich, Xianfeng Li, Gaogang Xie, Hui Li, Balajee Vamanan, Dagang Li, and Huiping Lin.
\newblock Tuple space assisted packet classification with high performance on both search and update.
\newblock {\em IEEE Journal on Selected Areas in Communications}, 38(7):1555--1569, 2020.

\bibitem{openflow}
T.~Anderson N.~McKeown, G.~Parulkar H.~Balakrishnan, J.~Rexford L.~Peterson, and J.~Turner S.~Shenker.
\newblock Openflow: enabling innovation in campus networks.
\newblock {\em ACM SIGCOMM Computer Communication Review}, 38(2):69--74, 2008.

\bibitem{class:cut:hyper}
F.~Baboescu S.~Singh and J.~Wang G.~Varghese.
\newblock Packet classification using multidimensional cutting.
\newblock In {\em ACM SIGCOMM '03}, pages 213--224. ACM, 2003.

\bibitem{tcam3}
Devavrat Shah and Pankaj Gupta.
\newblock Fast updating algorithms for tcam.
\newblock {\em IEEE Micro}, 21(1):36--47, 2001.

\bibitem{fpga1}
Haoyu Song and John~W Lockwood.
\newblock Efficient packet classification for network intrusion detection using fpga.
\newblock In {\em Proceedings of the 2005 ACM/SIGDA 13th international symposium on Field-programmable gate arrays}, pages 238--245. ACM, 2005.

\bibitem{class:trie:grid}
V.~Srinivasan, G.~Varghese, S.~Suri, and M.~Waldvogel.
\newblock Fast and scalable layer four switching.
\newblock In {\em ACM SIGCOMM '98}, pages 191--202, New York, NY, USA, 1998. ACM.

\bibitem{tss}
Venkatachary Srinivasan, Subhash Suri, and George Varghese.
\newblock Packet classification using tuple space search.
\newblock In {\em ACM SIGCOMM Computer Communication Review}, volume~29, pages 135--146. ACM, 1999.

\bibitem{ClassBench}
David~E Taylor and Jonathan~S Turner.
\newblock Classbench: A packet classification benchmark.
\newblock {\em IEEE/ACM Transactions on Networking (TON)}, 15(3):499--511, 2007.

\bibitem{BFI}
Tong Yang, Alex~X Liu, Yulong Shen, Qiaobin Fu, Dagang Li, and Xiaoming Li.
\newblock Fast openflow table lookup with fast update.
\newblock In {\em INFOCOM, 2018 Proceedings IEEE}. IEEE.

\bibitem{MT}
Chunyang Zhang and Gaogang Xie.
\newblock Multilayertuple: A general, scalable and high-performance packet classification algorithm for software defined network system.
\newblock In {\em 2021 IFIP Networking Conference (IFIP Networking)}, pages 1--9. IEEE, 2021.

\bibitem{zhong2021efficient}
Jincheng Zhong and Shuhui Chen.
\newblock Efficient multi-category packet classification using tcam.
\newblock {\em Computer Communications}, 169:1--10, 2021.

\bibitem{TupleTree}
Jincheng Zhong, Ziling Wei, Shuang Zhao, and Shuhui Chen.
\newblock Tupletree: A high-performance packet classification algorithm supporting fast rule-set updates.
\newblock {\em IEEE/ACM Transactions on Networking}, 2022.

\end{thebibliography}

\end{document}